\documentclass[twoside]{aiml}

% Please include these macros
\usepackage{aimlmacro}

\def\journal{0} 

% Here you can include the standard packages you use.
% Try to avoid using non-standard packages.
% If you use a non-standard package you will have
% to submit it when you submit the final version of
% your paper.
% \usepackage{graphicx}
\usepackage{amsmath}
\usepackage{amssymb}

%%%%% ADDED PACKAGES %%%%%
\usepackage{stmaryrd}   % needed for llfloor and rrfloor
\usepackage{paralist}
\usepackage[all]{foreign}
\usepackage{tikz}
\usepackage{subfig}
\usepackage{forest}

%%%%%%%%%%%%%%%%%%%%%%%%%%%%%%%%%%%%%%%%%%%%%%%%%%%%%%%%%
% Setting the correct page numbers
% Ignore the next two commented lines
% but please don't delete
%%%%%%%%%%%%%%%%%%%%%%%%%%%%%%%%%%%%%%%%%%%%%%%%%%%%%%%%%
%\input{../procnum.tex}
%\numbering{../aiml22db}{paper}

% definitions specific to your article
\usetikzlibrary{arrows,positioning,calc,backgrounds}

\usepackage{trimclip}

\newcommand{\B}[1]{\Box_{#1}}
\newcommand{\D}[1]{\Diamond_{#1}}

\newcommand{\atom}[1]{#1}

\newcommand{\atomSet}{\mathcal{P}}

\newcommand{\modalitiesSet}{\mathcal{I}}

\newcommand{\Lang}{\mathcal{L}}     %_{\modalitiesSet, \atomSet}

\newcommand{\md}{\mathit{md}}

\newcommand{\depth}[1]{d(#1)}
\newcommand{\bound}[1]{b(#1)}

% bisimulation
\newcommand{\bisim}{\leftrightarroweq} %{\underline{\leftrightarrow}} 

\newcommand{\contr}[2]{{\lfloor #1 \rfloor}_{#2}}
\newcommand{\rootedContr}[3]{{\llfloor #1 \rrfloor}_{#2}^{#3}}

\newcommand{\class}[2]{[#1]_{#2}}
\newcommand{\reprClass}[1]{\llbracket #1 \rrbracket}%{[#1]_{b(#1)}}

\newcommand{\maxRepr}[1]{#1^{\max}}
\newcommand{\leastRepr}[2]{\min_{#2}(#1)}

%%%%% ADDITIONAL IDEAS %%%%%

\tikzstyle{world}       =[circle,   thick,draw=black,       fill=black,minimum size=4pt,inner sep=0pt]
\tikzstyle{poinetdworld}=[circle,   thick,draw=black,double,fill=black,minimum size=4pt,inner sep=0pt]

%%%%%%%%%%%%%%%%%%%%%%%%%%%%%%%%%%%%%%%%%%%%%%%%%%%%%%%%%

%The following line defines the page header consisting of the surnames of the authors.
% Please include only the last names!
% Separate by commas except the last two surnames which are separated by an "and".

\begin{document}
    \begin{frontmatter}
        \title{Better Bounded Bisimulation Contractions (Preprint)}\footnote{This work is currently under review.}
        \author{Thomas Bolander}\footnote{E-mail: \url{tobo@dtu.dk}.}
        \address{Technical University of Denmark, Denmark}
        \author{Alessandro Burigana}\footnote{E-mail: \url{burigana@inf.unibz.it}.}       
        \address{Free University of Bozen-Bolzano, Italy}

        \begin{abstract}
    Bisimulations are standard in modal logic and, more generally, in the theory of state-transition systems. The quotient structure of a Kripke model with respect to the bisimulation relation is called a bisimulation contraction. The bisimulation contraction is a minimal model bisimilar to the original model, and hence, for (image-)finite models, a minimal model modally equivalent to the original. Similar definitions exist for bounded bisimulations ($k$-bisimulations) and bounded bisimulation contractions. Two finite models are $k$-bisimilar if and only if they are modally equivalent up to modal depth $k$. However, the quotient structure with respect to the $k$-bisimulation relation does not guarantee a minimal model preserving modal equivalence to depth $k$. In this paper, we remedy this asymmetry to standard bisimulations and provide a novel definition of bounded contractions called rooted $k$-contractions. We prove that rooted $k$-contractions preserve $k$-bisimilarity and are minimal with this property. Finally, we show that rooted $k$-contractions can be exponentially more succinct than standard $k$-contractions.
    \if\journal1
    Finally, we introduce canonical $k$-contractions, providing unique minimal representatives of every class of $k$-bisimilar models.
    \fi 
\end{abstract}

\begin{keyword}
    Modal logic, Kripke models, Bounded bisimulations, Bisimulation contractions, Exponential succinctness.   
\end{keyword}

    \end{frontmatter}

    \section{Introduction}
    Bisimulation plays a central role in countless fields, such as modal logic, set theory, formal verification, concurrency theory, process calculus, and others. Two structures are bisimilar if they are indistinguishable with respect to some behavioral property. In the case of Kripke models, bisimilarity between two models mean that their accessibility relations are structurally equivalent, and it then follows that two (image-)finite models are bisimilar if and only if they are modally equivalent (modal equivalence means they satisfy the same formulas)~\cite{book/cup/Blackburn2001}.   
    When using Kripke models for computational purposes, it is often desirable to keep the models as small as possible while preserving their logical properties, \eg preserving modal equivalence. The quotient structure of a Kripke model with respect to the bisimulation relation is called a \emph{bisimulation contraction}, giving us a minimal model modally equivalent to the original~\cite{blackburn2006modal}. 
    
    Bounded bisimulations only preserve structural equivalence up to some depth $k$. In the case of Kripke models, two models are $k$-bisimilar if their accessibility relations are structurally equivalent up to depth $k$. Two finite models are then $k$-bisimilar if and only if they are modally equivalent to modal depth $k$~\cite{book/cup/Blackburn2001}. When the modalities are used to represent knowledge in an epistemic logic, $k$-bisimilarity means that higher-order reasoning is preserved to depth $k$. If we are only interested in reasoning to depth $k$, say in a setting with agents having bounded rationality, it seems intuitive to consider the quotient structure with respect to $k$-bisimilarity in the attempt to find a minimal model preserving modal equivalence to depth $k$. We call the quotient structure with respect to $k$-bisimilarity the \emph{standard $k$-contraction}~\cite{conf/ijcai/Yu2013}. 
    However, as we will show, the standard $k$-contraction does not guarantee a minimal model preserving modal equivalence to depth $k$. We provide an alternative notion of $k$-bisimulation contraction, called a \emph{rooted $k$-contraction}, that indeed guarantees minimality. % among models preserving modal equivalence to depth $k$.
    
    The inspiration for this paper came from bounded rationality in epistemic planning~\cite{belle2022epistemic}. Epistemic planning is concerned with computing plans for agents having incomplete information about the world and each other's knowledge, using epistemic logic as the underlying formalism~\cite{bolander2011epistemic}.  
 %An influential approach to epistemic planning uses dynamic epistemic logic
     Unfortunately, epistemic planning is in general undecidable, \ie it has an undecidable plan existence problem~\cite{bolander2011epistemic,bolander2020del}. In the search for interesting decidable fragment of epistemic planning, we decided to limit the reasoning capabilities of agents to some fixed depth $k$ (e.g.\ limiting to depth 2 would mean that agents can reason about what they know about the knowledge of others, but not what they know about what others know about them). In order to define such \emph{depth-limited epistemic planning} formally, we needed a notion of contraction that would preserve modal equivalence to depth $k$. We originally started out using standard $k$-contractions, but soon discovered that they didn't guarantee minimality of the contracted models, in many cases actually quite far from it. We then set out on a quest to try to find a better notion of $k$-contraction that would guarantee minimality among $k$-bisimilarity preserving models, hence also reestablishing the symmetry to the corresponding existing results for standard bisimulations. In this paper we report on the results of that quest. We will report on the results of applying these notions to epistemic planning in a separate paper.

\section{Preliminaries}\label{sec:preliminaries}
    In this section, we recall some basic notions in modal logic, \ie pointed Kripke models, bisimulation and bounded bisimulation~\cite{book/cup/Blackburn2001}.
    Let $\atomSet$ be a countable set of atomic propositions and $\modalitiesSet$ a finite set of modality indices.
    % \tobo{$\mathcal{M}$ is one of the most widespread symbols used to denote Kripke models, so it's unfortunate as a symbol for modality indices (note how it's spelled in plural).}
    The language $\Lang$ of \emph{multi-modal logic} is defined by the following BNF (where $\atom{p} \in \atomSet$ and $i \in \modalitiesSet$): 
    \begin{equation*}
     \varphi ::= \atom{p} \mid \neg \varphi \mid \varphi \wedge \varphi \mid \B{i} \varphi. 
    \end{equation*}
    % \noindent
    Symbols $\top$, $\bot$, $\vee$ and $\D{i}$ are defined as usual.
    \emph{Modal depth} is defined inductively on the structure of formulas: $\md(p) = 0$ (for all $p \in \atomSet$), $\md(\neg \varphi) = \md(\varphi)$, $\md(\varphi_1 \land \varphi_2) = \max\{\md(\varphi_1), \md(\varphi_2)\}$ and $\md(\B{i} \varphi) = 1 + \md(\varphi)$.

    \begin{definition}\label{def:model}
        A \emph{model} of $\Lang$ is a triple $M = (W, R, V)$ where:
        \begin{compactitem}
            \item $W \neq \varnothing$ is a finite set of \emph{(possible) worlds};
            % \tobo{So now you go directly to finite models. I thought we wanted to keep it general, or at least address the general case of infinite models. If the default is finite models, and we later cover infinite ones, we should still here let the reader know and refer to the discussion about infinite models. Some readers/reviewers will not like of restricting generality unnecessarily already from the beginning.}
            \item $R: \modalitiesSet \rightarrow 2^{W \times W}$ assigns to each $i \in \modalitiesSet$ an \emph{accessibility relation} $R_i$;
            \item $V: \atomSet \rightarrow 2^W$ is a \emph{valuation function} assigning to each atom a set of worlds.
        \end{compactitem}
        \noindent A \emph{pointed model} $\mathcal{M}$ is a pair $(M, w_d)$, where $w_d \in W$ is the \emph{designated world}.
        % \tobo{Where did you get the term ``pointed world'' from? I never heard anyone call it the pointed world. If someone did, it's OK, but as mentioned, one should always use existing terms and notation, whenever possible. ``Designated world'' and ``actual world'' are fairly standard. Maybe ``pointed world'' also is, but I never saw it. Also, many readers are somewhat sensitive to the use of the same symbol to denote too many things. In that respect, it might be considered a bit unfortunate that $p$ is both our main propositional symbol, and at the same time the index of the actual world.}
        % \tobo{I'm not sure we should use the term ``state'' in this paper. It's not standard in modal logic. Actually, in modal logic, often the worlds are called the states, and we make the paper less readable to people in modal logic by calling models states (it's different when we do epistemic planning). We can call then pointed models, to avoid saying ``pointed Kripke model'' all the time, but e.g. the modal logic book talks about ``pointed models''. I'd say we should stick to all terminology and notation from the Modal Logic book unless we have strong reasons to do otherwise.}  
    \end{definition}
    We also use $w R_i v$ for $(w, v) \in R_i$. We call an \emph{$i$-edge}, or simply an \emph{edge}, such a pair of worlds. A \emph{path} is a sequence of worlds connected by edges. Note that we have restricted our attention to finite models. All of our results generalize to infinite models, but for many of the results we then have to make additional assumptions such as the models being image-finite, the underlying set of propositional atoms being finite or the number of modalities being finite~\cite{book/cup/Blackburn2001}. To avoid this additional layer of complexity, and since all of our intended applications are within finite models, we restrict to those throughout the paper.

    % We denote the sets of \emph{outgoing} and \emph{incoming $i$-edges} of $w$ with $R^+_i(w) = \{(w, v) \in R_i\}$ and $R^-_i(w) = \{(v, w) \in R_i\}$, respectively.
    % Moreover, we let $R^+(w) = \bigcup_{i \in \modalitiesSet} R^+_i(w)$ and $R^-(w) = \bigcup_{i \in \modalitiesSet} R^-_i(w)$.
    % \tobo{When you say ``in place of'', you in principle disallow the notation $(w,v) \in R_i$. you probably don't want that. You probably want to say that we *also* use that other notation?}

    \begin{definition}\label{def:truth}
        Let $M = (W, R, V)$ be a model of $\Lang$ and let $w \in W$.
        \begin{equation*}
            \begin{array}{@{}lll}                    
                (M, w) \models \atom{p}            & \text{ iff } & w \in V(p) \\
                (M, w) \models \neg \varphi        & \text{ iff } & (M, w) \not\models \varphi \\
                (M, w) \models \varphi \wedge \psi & \text{ iff } & (M, w) \models \varphi \text{ and } (M, w) \models \psi \\
                (M, w) \models \B{i} \varphi       & \text{ iff } & \text{for all } v \text{ if } w R_i v \text{ then } (M, v) \models \varphi
            \end{array}
        \end{equation*}
        % Moreover, $(M, W_d) \models \varphi$ iff $\forall w $ if $w \in W_d$ then $(M, w) \models \varphi$.
        % \noindent We also write $s \models \varphi$ in place of $(M, w_d) \models \varphi$.

        % \tobo{Mixing object- and meta-language like writing $\forall v$ to say ``for all $v$'' can also provoke some. If there's sufficient space, I would always write it out as ``for all $v$''.}
    \end{definition}
    We say that two pointed models $\mathcal{M}$ and $\mathcal{M}'$ \emph{agree on} (the formulas of) a set $\Phi \subseteq \Lang$ if, for all $\phi \in \Phi$, $\mathcal{M} \models \phi$ iff $\mathcal{M}' \models \phi$.
    % Moreover, $\mathcal{M}$ and $\mathcal{M}'$ are \emph{isomorphic}, denoted by $\mathcal{M} \cong \mathcal{M}'$, if there exists a bijection $f : W \rightarrow W'$ such that:
    % \begin{inparaenum}
    %     \item $w R_i v$ iff $f(w) R'_i f(v)$; and
    %     \item $w \in V(p)$ iff $f(w) \in V'(p)$, for all $p \in \atomSet$.
    % \end{inparaenum}
    We recall below the notions of bisimulation and bounded bisimulation ($k$-bisimulations) \cite{book/cup/Blackburn2001,books/el/07/GorankoO}.
    %When an integer $k \geq 0$ is clear from the context, we also refer to bounded bisimulations as \emph{$k$-bisimulations}.
    % \tobo{I think there's a problem with the Goranko citation, the title starts with the letter 5.} 
    
    \begin{definition}
        Let $(M, w_d)$ and $(M', w'_d)$ be two pointed models, with $M=(W, R, V)$ and $M' = (W', R', V')$. A \emph{bisimulation} between $(M,w_d)$ and $(M',w'_d)$ is a non-empty binary relation $Z \subseteq W \times W'$ with $(w_d, w'_d) \in Z$ and satisfying:
        \begin{compactitem}
            \item {[atom]} If $(w, w') \in Z$, then for all $\atom{p} \in \atomSet$, $w \in V(\atom{p})$ iff $w' \in V'(\atom{p})$.
            % \tobo{The typing \emph{[atoms]} doesn't look so good to me. How about [atoms]? Also, isn't it usually [atom] in singular?}
            \item {[forth]} If $(w, w') \in Z$ and $w R_i v$, then there exists $v' \in W'$ such that $w' R'_i v'$ and $(v, v') \in Z$.
            \item {[back]} If $(w, w') \in Z$ and $w' R'_i v'$, then there exists $v \in W$ such that $w R_i v$ and $(v, v') \in Z$.
            % \item \emph{Designated}: if $ w \in W_d $, then there exists a $ w' \in W'_d $ such that $ (w, w') \in B $, and vice versa.
        \end{compactitem}
        If a bisimulation between $(M,w_d)$ and $(M',w'_d)$ exists, we say that $(M,w_d)$ and $(M',w'_d)$ are \emph{bisimilar}, denoted $(M,w_d) \bisim (M',w_d')$.
        %Moreover, if $(w, w') \in Z$, we say that $(M, w)$ and $(M', w')$ are bisimilar and we write $(M, w) \bisim (M', w')$.
        When $(M,w) \bisim (M,w')$ for some worlds $w,w'$ of the same model $M$, we simply write $w \bisim w'$, and say that $w$ and $w'$ are \emph{bisimilar}. 
        Finally, we denote the \emph{bisimulation (equivalence) class} of a world $w \in W$ as $[w]_{\bisim} = \{v \in W \mid w \bisim v\}$.
    \end{definition}

    \begin{proposition}[\cite{book/cup/Blackburn2001}]\label{prop:bisim}
        Two pointed models are bisimilar iff they agree on $\Lang$.
    \end{proposition}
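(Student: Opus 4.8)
The plan is to prove the two implications separately. For the left-to-right direction (bisimilarity implies agreement on $\Lang$) I would show the stronger claim that whenever $Z$ is a bisimulation between $(M,w_d)$ and $(M',w'_d)$ and $(w,w') \in Z$, then $(M,w) \models \varphi$ iff $(M',w') \models \varphi$ for every $\varphi \in \Lang$; applying this to the pair $(w_d,w'_d) \in Z$ yields agreement. This claim is proved by a routine induction on the structure of $\varphi$. The atomic case is exactly the \textbf{[atom]} clause of the bisimulation definition, the Boolean cases are immediate from the induction hypothesis, and the case $\varphi = \B{i}\psi$ is where \textbf{[forth]} and \textbf{[back]} do the work: if $(M,w) \models \B{i}\psi$ then every $R_i$-successor $v$ of $w$ is matched by an $R'_i$-successor $v'$ of $w'$ with $(v,v') \in Z$, so by induction each such $v'$ satisfies $\psi$; using \textbf{[back]} to ensure every successor of $w'$ arises this way gives $(M',w') \models \B{i}\psi$, and symmetrically for the converse. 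This direction uses no finiteness assumption.

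The right-to-left direction (agreement implies bisimilarity) is the Hennessy–Milner-style argument and is where the finiteness of the models is essential. Here I would define the candidate relation $Z = \{(w,w') \mid (M,w) \text{ and } (M',w') \text{ agree on } \Lang\}$, which is nonempty and contains $(w_d,w'_d)$ by hypothesis, and then verify that $Z$ is a bisimulation. The \textbf{[atom]} clause holds because atoms are formulas of $\Lang$. For \textbf{[forth]}, suppose $(w,w') \in Z$ and $w R_i v$; I must produce an $R'_i$-successor $v'$ of $w'$ with $(v,v') \in Z$. Since $M'$ is finite, $w'$ has only finitely many $i$-successors $v'_1,\dots,v'_n$. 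If some $v'_j$ agrees with $v$ on all of $\Lang$ we are done; otherwise, for each $j$ I can pick a formula $\chi_j$ with $(M,v) \models \chi_j$ but $(M',v'_j) \not\models \chi_j$ (flipping to a negation if needed). Then $(M,w) \models \D{i}\bigwedge_{j} \chi_j$ while $(M',w') \not\models \D{i}\bigwedge_{j} \chi_j$, contradicting $(w,w') \in Z$. The degenerate case $n = 0$ is handled the same way via $\D{i}\top$, since $w$ has a successor but $w'$ does not. The \textbf{[back]} clause is symmetric.

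The main obstacle, and the only genuinely nontrivial point, is this finiteness step in the completeness direction: the passage from a collection of distinguishing formulas to a single distinguishing formula $\D{i}\bigwedge_j \chi_j$ requires that the successor set be finite so that the conjunction is a well-formed (finite) formula of $\Lang$. Everything else is bookkeeping. It is worth flagging that the finiteness assumption on models, stated in Definition~\ref{def:model}, is exactly what makes this argument go through without invoking image-finiteness or a finite atom set as separate hypotheses.
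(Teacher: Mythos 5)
Your proposal is correct and is exactly the standard Hennessy--Milner argument (induction on formula structure for soundness; a finite conjunction of distinguishing formulas under a $\D{i}$ for completeness, using finiteness of the models). The paper does not prove this proposition itself but cites it to \cite{book/cup/Blackburn2001}, where essentially this same argument appears, so there is nothing to compare beyond noting that your treatment of the degenerate case via $\D{i}\top$ and your observation that only image-finiteness is really needed are both accurate.
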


    \begin{definition}\label{def:k-bisim}
        Let $k \geq 0$ and let $(M, w_d)$ and $(M', w'_d)$ be two pointed models, with $M=(W, R, V)$ and $M' = (W', R', V')$. A \emph{$k$-bisimulation} between $(M, w_d)$ and $(M', w'_d)$ is a sequence of non-empty binary relations $Z_k \subseteq \cdots \subseteq Z_0 \subseteq W \times W'$ with $(w_d, w'_d) \in Z_k$ and satisfying, for all $h < k$:
        \begin{compactitem}
            \item {[atom]} If $(w, w') \in Z_0$, then for all $\atom{p} \in \atomSet$, $w \in V(\atom{p})$ iff $w' \in V'(\atom{p})$.
            \item {[forth$_h$]} If $(w, w') \in Z_{h+1}$ and $w R_i v$, then there exists $v' \in W'$ such that $w' R'_i v'$ and $(v, v') \in Z_h$.
            \item {[back$_h$]} If $(w, w') \in Z_{h+1}$ and $w' R'_i v'$, then there exists $v \in W$ such that $w R_i v$ and $(v, v') \in Z_h$.
            % \item \emph{Designated}: if $ v \in W_d $, then there exists $v' \in W'_d$ such that $(v, v') \in Z_k$, and vice versa.
        \end{compactitem}
        If a $k$-bisimulation between $(M, w_d)$ and $(M', w'_d)$ exists, we say that $(M, w_d)$ and $(M', w'_d)$ are \emph{$k$-bisimilar}, denoted $(M, w_d) \bisim_k (M', w'_d)$.
        %Moreover, for any $0 \leq h \leq k$, if $(w, w') \in Z_h$, we say that $(M, w)$ and $(M', w')$ are $h$-bisimilar and we write $(M, w) \bisim_h (M', w')$.
        When $(M,w) \bisim_k (M,w')$ for some worlds $w,w'$ of the same model $M$, we simply write $w \bisim_k w'$, and say that $w$ and $w'$ are \emph{$k$-bisimilar}. 
        %        both $w$ and $w'$ are worlds of $M$, we simply write $w \bisim_h w'$.
        Finally, we denote the \emph{$k$-bisimulation (equivalence) class} of a world $w \in W$ as $\class{w}{k} = \{v \in W \mid w \bisim_k v\}$.
    \end{definition}
    Note that a $k$-bisimulation between pointed models is also an $h$-bisimulation for all $h \leq k$, and hence that $k$-bisimilar worlds are also $h$-bisimilar for all $h \leq k$.

%    We also refer to a generic class $\class{w}{h}$ as a \emph{bounded bisimulation class}. \tobo{It's only used a very few times, so I'm wondering whether it's worth introducing a new notion for it, but in terms of cognitive load on the reader and in terms of space in the paper.}

    \begin{proposition}[\cite{book/cup/Blackburn2001}]\label{prop:k-bisim}
        Two pointed models are $k$-bisimilar iff they agree on $\{\phi \in \Lang \mid \md(\phi) \leq k\}$, \ie on all of formulas up to modal depth $k$.
        % \tobo{Why don't you also include the corresponding result for standard bisimulations after that definition? I think that would be nice for symmetry.}
        % \tobo{I know the modal logic book uses the phrase to ``agree on all formulas of modal depth at most $k$'', but I would probably define that term first. In particular when it's about pointed models and not just worlds, then it might not be obvious to everybody what it's supposed to mean. We could for instance first define modal equivalence as in the book and then define modal equivalence to depth $k$, and then use those terms. But of course that's not strictly necessary. Another solution in such situations is to add the definition to the formulation, in case the formulation is potentially vague or ambiguous. So e.g. ``the agree on all of formulas of model depth at most $k$, i.e., $s \bisim_k s$ iff $\{ \phi \mid s \models \phi \text{ and } \md(\phi) \leq k \} = \{ \psi  \mid s' \models \psi \text{ and } \md(\psi) \leq k \}$.'' Then you make sure that people will understand even if they find the natural language formulation unclear. And it's also better to include both than only the formal one, as the natural language formulation helps conveying the underlying intuition.}
    \end{proposition}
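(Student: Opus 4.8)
The plan is to prove the two directions separately. For the left-to-right direction (that $k$-bisimilarity implies agreement on all formulas up to modal depth $k$), I would fix a $k$-bisimulation $Z_k \subseteq \cdots \subseteq Z_0$ and prove the slightly stronger statement by structural induction on $\phi$: for every $h \leq k$ and every formula $\phi$ with $\md(\phi) \leq h$, if $(w,w') \in Z_h$ then $(M,w) \models \phi$ iff $(M',w') \models \phi$. The atomic case uses [atom] together with the inclusion $Z_h \subseteq Z_0$; the Boolean cases are immediate from the induction hypothesis, since $\md$ does not increase under negation and both conjuncts of a conjunction of depth $\leq h$ themselves have depth $\leq h$. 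The only substantive case is $\phi = \B{i}\psi$, where $\md(\psi) \leq h-1$ and hence $h \geq 1$: assuming $(M,w) \models \B{i}\psi$ and taking an arbitrary $i$-successor $v'$ of $w'$, condition [back$_{h-1}$] (available since $h-1 < k$) yields an $i$-successor $v$ of $w$ with $(v,v') \in Z_{h-1}$, and the induction hypothesis transfers $\psi$ from $v$ to $v'$; the converse implication uses [forth$_{h-1}$] symmetrically. Instantiating at $h = k$ with $(w_d, w'_d) \in Z_k$ gives the claim.

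For the right-to-left direction I would define, for each $h \leq k$, the relation $Z_h = \{(w,w') \mid (M,w) \text{ and } (M',w') \text{ agree on all } \phi \text{ with } \md(\phi) \leq h\}$ and verify that this sequence is a $k$-bisimulation. The inclusions $Z_{h+1} \subseteq Z_h$ hold because the depth-$\leq h$ formulas form a subset of the depth-$\leq(h{+}1)$ formulas, the condition $(w_d,w'_d) \in Z_k$ is exactly the hypothesis, and [atom] is immediate (atoms have depth $0$). The crux is [forth$_h$], and symmetrically [back$_h$]. Given $(w,w') \in Z_{h+1}$ and $w R_i v$, I would argue by contradiction: if no $i$-successor $v'$ of $w'$ satisfied $(v,v') \in Z_h$, then for each of the $i$-successors $v'$ there would be a formula $\phi_{v'}$ of depth $\leq h$ that is true at $v$ but false at $v'$ (replacing $\phi_{v'}$ by its negation if needed, which preserves modal depth). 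The conjunction $\phi = \bigwedge \phi_{v'}$ then has depth $\leq h$, is true at $v$, and is false at every $i$-successor of $w'$; hence $(M,w) \models \D{i}\phi$ while $(M',w') \not\models \D{i}\phi$, contradicting $(w,w') \in Z_{h+1}$ since $\md(\D{i}\phi) \leq h+1$.

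The main obstacle is precisely this forth/back step: forming a single distinguishing formula $\bigwedge \phi_{v'}$ of bounded modal depth requires that $w'$ have only finitely many $i$-successors, so that the conjunction is a genuine formula of $\Lang$. This is exactly where the restriction to finite models (or, more generally, image-finiteness, as noted after Definition~\ref{def:model}) becomes indispensable; without it the right-to-left direction can fail, mirroring the classical image-finiteness hypothesis in the Hennessy--Milner-style argument underlying Proposition~\ref{prop:bisim}.
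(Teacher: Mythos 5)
The paper does not prove this proposition itself---it is imported from Blackburn, de Rijke and Venema---but your argument is correct and is essentially the standard proof given there: structural induction on formulas for the left-to-right direction, and the Hennessy--Milner-style construction of a distinguishing conjunction $\bigwedge \phi_{v'}$ for the right-to-left direction, with finiteness (image-finiteness) of the models doing exactly the work you identify. You also correctly handle the small details (taking negations to orient the distinguishing formulas, the inclusions $Z_{h+1}\subseteq Z_h$, and non-emptiness via $(w_d,w'_d)\in Z_k$), so nothing is missing.
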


%    \begin{proposition}\label{prop:h-bisim-inclusion}
%        Let $h \geq 0$ and $0 \leq h' < h$. If two pointed models are $h$-bisimilar, then they also are $h'$-bisimilar.
%    \end{proposition}

%    \begin{corollary}\label{cor:0-bisim}
%        Let $h \geq 0$ and let $w' \in \class{w}{h}$ such that $w' \neq w$. Then, $w$ and $w'$ satisfy the same atomic propositions, \ie, $w \bisim_0 w'$.
%    \end{corollary}

        % We now introduce a stronger notion of $k$-bisimulation called \emph{total $k$-bisimulation}.

        % \begin{definition}[Total $k$-bisimulation]\label{def:tot-k-bisim}
        %     Let $k \geq 0$. A binary relation $ B^\forall_k \subseteq W \times W' $ is a \emph{total $k$-bisimulation} between epistemic states $ s = ((W, R, V), w_d) $ and $ s' = ((W', R', V'), $ $ w'_d) $ iff $(w_d, w'_d) \in B^\forall_k$ and for each $w \in W$, $w' \in W'$ the following conditions hold:
        %     \begin{compactitem}
        %         \item~[\emph{Forth$^\forall$}]: for all $w \in W$ there exist $w' \in W'$ such that $(M, w) \bisim_k (M', w')$.
        %         \item~[\emph{Back$^\forall$}]: for all $w' \in W'$ there exist $w \in W$ such that $(M, w) \bisim_k (M', w')$.
        %     \end{compactitem}
        %     \tobo{Reconsider name and def.}
        % \end{definition}

        % \noindent We say that epistemic states $s$ and $s'$ are \emph{totally $k$-bisimilar} (denoted by $s \totalbisim_k s'$) when there exists a total $k$-bisimulation between them. Moreover, we denote that $(w,w') \in B_k$ by $w \totalbisim_k w'$.

        % Finally, we say that two epistemic states are \emph{$k$-equivalent} iff they agree on all formulas of modal depth at most $k$.

        \begin{definition}\label{def:depth}
            Let $(M,w_d)$ be a pointed model. The \emph{depth} $\depth{w}$ of a world $w$ is the length of the shortest path from $w_d$ to $w$ ($\infty$ if no such path exists).
            Given $k \geq 0$, the \emph{restriction} $M \upharpoonright k$ of $M$ to $k$ is the sub-model containing all worlds with depth at most $k$ (and preserving all edges between them). % More precisely, $M \upharpoonright k = (W', R', V')$, where $W' = \{w \in W \mid \depth{w} \leq k\}$, $R'_i = R_i \cap (W' \times W')$ and $V'(p) = V(p) \cap W'$, for all $p \in \atomSet$.
        \end{definition}
    
%        The next lemma states that considering sub-models of depth $k$ is sufficient to maintain the satisfiability of formulas of modal depth at most $k$.
    
        \begin{lemma}[\cite{book/cup/Blackburn2001}]\label{lem:restriction}
            Let $M$ and $k$ be as above. Then, for every world $w$ of $M \upharpoonright k$, we have $(M \upharpoonright k, w) \bisim_{k - \depth{w}} (M, w)$.
        \end{lemma}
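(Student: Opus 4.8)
The plan is to exhibit an explicit $(k - \depth{w})$-bisimulation between $(M \upharpoonright k, w)$ and $(M, w)$ built from the identity relation, stratified by depth. Write $n = k - \depth{w}$; since $w$ is a world of $M \upharpoonright k$ we have $\depth{w} \leq k$, so $n \geq 0$. For each $h$ with $0 \leq h \leq n$, I would set
\[
Z_h = \{(v,v) \mid v \text{ is a world of } M \upharpoonright k \text{ and } \depth{v} \leq k - h\}.
\]
Because the bound $k - h$ grows as $h$ decreases, this immediately yields the required chain $Z_n \subseteq \cdots \subseteq Z_0$, and $(w,w) \in Z_n$ holds since $\depth{w} \leq k - n = \depth{w}$.

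The one structural fact I would isolate first is that traversing an edge increases depth by at most one: if $v R_i u$ and $\depth{v}$ is finite, then appending that edge to a shortest path from $w_d$ to $v$ gives a path to $u$ of length $\depth{v} + 1$, so $\depth{u} \leq \depth{v} + 1$. Combined with the fact that $M \upharpoonright k$ retains every edge of $M$ whose endpoints both survive the restriction, this inequality is the engine of the whole argument.

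With these in hand I would verify the three clauses. The \textbf{[atom]} clause is immediate, since $M \upharpoonright k$ inherits the valuation of $M$, so each world satisfies exactly the same atoms in both models. For \textbf{[forth$_h$]}, suppose $(v,v) \in Z_{h+1}$ (so $\depth{v} \leq k - h - 1$) and $v R_i u$ in $M \upharpoonright k$; the same edge holds in $M$, and choosing $u$ itself as the witness works because $\depth{u} \leq \depth{v} + 1 \leq k - h$ puts $(u,u)$ in $Z_h$. The \textbf{[back$_h$]} clause is the delicate one: given $(v,v) \in Z_{h+1}$ and an edge $v R_i u'$ in $M$, I must produce a matching edge \emph{inside} $M \upharpoonright k$, which requires knowing that $u'$ was not discarded by the restriction.

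This last point is exactly where I expect the main obstacle to lie, and it is resolved entirely by the depth inequality: $\depth{u'} \leq \depth{v} + 1 \leq k - h \leq k$, so $u'$ survives in $M \upharpoonright k$, the edge $v R_i u'$ is therefore preserved there, and $(u',u') \in Z_h$. Once this is observed, everything reduces to routine bookkeeping on the depth bounds, and the four conditions of Definition~\ref{def:k-bisim} hold, establishing that $(M \upharpoonright k, w) \bisim_{k - \depth{w}} (M, w)$.
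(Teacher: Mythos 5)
Your proof is correct: the stratified identity relation $Z_h = \{(v,v) \mid \depth{v} \leq k-h\}$ is exactly the right witness, and you have correctly isolated the only delicate point, namely that in [back$_h$] the successor $u'$ has $\depth{u'} \leq \depth{v}+1 \leq k-h \leq k$ and therefore survives the restriction together with the edge into it. The paper states this lemma with a citation to the literature and gives no proof of its own, so there is nothing to compare against; your argument is the standard one and is complete.
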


    \section{Defining Rooted $k$-Contractions}\label{sec:world-min-contr}
%    \ale{I think we should either find better titles or merging sections 3 and 4 into one big section called ``Rooted $k$-Contractions'' and by converting the current sections into subsections. I am okay with both options. What do you think?}
    The notion of bisimulation contraction is well-known in modal logic. The \emph{(bisimulation) contraction} of a pointed model $\mathcal{M} = ((M, R, V), w_d)$, that we denote with $\contr{\mathcal{M}}{}$, is
    % \tobo{Do people actually call them *full* bisimulation contractions? I've never seen that term used by anyone. It seems a bit odd to talk about typical definitions of a term that doesn't appear to occur in the literature (I couldn't find it). Why are they not just called bisimulation contractions here? Then I agree that they are typically defined as you say. Actually, they are *always* defined as you say.}    
    defined as the \emph{quotient structure} of $\mathcal{M}$ with respect to $\bisim$, \ie 
    %    the largest bisimulation within $\mathcal{M}$ \cite{books/el/07/GorankoO}. Namely, 
    $\contr{\mathcal{M}}{} = ((W', R', V'), [w_d]_\bisim)$,
    % \tobo{As earlier mentioned, I would probably not rely on the notion of a state $s$, but instead talk about models and pointed models. Isn't it also so that normally a bisimulation contraction is a contraction of the entire model (not pointed), and then we pick the point afterwards? I don't know how they do it in the Goranko book, but again, I would try to stay as close as possible to the standard conventions in modal logic.} 
    where $W' = \{\class{w}{\bisim} \mid w \in W\}$, $R'_i = \{(\class{w}{\bisim}, \class{v}{\bisim}) \mid w R_i v\}$, and $V'(p) = \{\class{w}{\bisim} \in W' \mid w \in V(p)\}$~\cite{books/el/07/GorankoO}. It is relatively straightforward to prove that:
    \begin{inparaenum}
        \item $\contr{\mathcal{M}}{}$ is bisimilar to $\mathcal{M}$; and
        \item $\contr{\mathcal{M}}{}$ is a minimal model bisimilar to $\mathcal{M}$. % \tobo{Is this proved in [8]? If yes, then add the citation also here. Or somewhere else where it's proved.}
    \end{inparaenum}
    % In the literature,
    % \tobo{I would never write "in the literature" without providing a reference in that very same sentence. But it might also not be necessary to say  ``in the literature''. You could e.g. say ``A similar definition exists for $k$-contractions'' and then provide the references later (as you also do now).} 
    A similar definition exists for \emph{$k$-(bisimulation) contractions}.
    Namely, the $k$-contraction of $\mathcal{M}$, that we denote with $\contr{\mathcal{M}}{k}$, has been defined as the quotient structure of $\mathcal{M}$ with respect to  $\bisim_k$~\cite{journals/logcom/2023/BolanderL,conf/ijcai/Yu2013}.
    %\ale{did we have other refs?}
    %\tobo{I'm not sure. Didn't you/we try to search for it on Google and Google Scholar, but didn't find much? We should maybe try again, would be nice if it was used by others.}
    We call this the \emph{standard $k$-contraction} of $\mathcal{M}$.
    However, although such a contracted model is $k$-bisimilar to the original one~\cite{journals/logcom/2023/BolanderL,conf/ijcai/Yu2013},
    %\tobo{I would then also state a reference at this point. Here you are claiming or stating a result, and then it's nice to know where that result comes from.}
    in general it is not minimal, as the following example shows.

    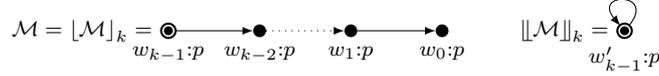
\begin{figure}
        \centering
        \begin{tikzpicture}[>=latex,auto,style={font=\sffamily\footnotesize}]
    \node[poinetdworld, label={below:$w_{k-1}{:}p$}, label={left:$\mathcal{M} = \contr{\mathcal{M}}{k} =$}]
                                                     (w0)  at (0,   0) {};
    \node[world,        label={below:$w_{k-2}{:}p$}] (w1)  at (1.2, 0) {};
    \node[world,        label={below:$w_1{:}p$}]     (w2)  at (2.4, 0) {};
    \node[world,        label={below:$w_0{:}p$}]     (w3)  at (3.6, 0) {};

    \path
        (w0) edge[->]        (w1)
        (w1) edge[->,dotted] (w2)
        (w2) edge[->]        (w3)
    ;
    
    \node[poinetdworld, label={below:$w'_{k-1}{:}p$}, label={left:$\rootedContr{\mathcal{M}}{k}{}=$}] (v0) at (6, 0) {};

    \path
        (v0) edge[->, loop, out=50, in=130, looseness=14] (v0)
    ;
\end{tikzpicture}
        \caption{Standard ($\contr{\mathcal{M}}{k}$) and rooted ($\rootedContr{\mathcal{M}}{k}{}$) $k$-contractions of chain $\mathcal{M}$ (symbol $\rootedContr{\mathcal{M}}{k}{}$ is borrowed from Definition \ref{def:rooted-k-contr-1}).
        % \tobo{I think we don't need the $<$ here, we just refer to the world-minimal contractions. Also in the figure. And refer to the corresponding definition.}.
        Each world $w$ is denoted by a bullet labeled by its name, followed by the atomic propositions that hold in $w$.
        An arrow labeled with $i$ from $w$ to $v$ means that $w R_i v$.
        We omit the labels on arrows whenever $|\modalitiesSet| = 1$.
        The designated world is represented by a circled bullet.
        % \tobo{The first time one shows figures, one has to settle the conventions. You have a lot of conventions regarding the meaning of edges, how you denote pointed worlds, how you express the valuation function etc. All this has to be explained. I've done it in many papers, so I can also do it, but you can also borrow some of my formulations from earlier papers if you want.}
        }
        \label{fig:chain}
    \end{figure}
    \begin{example}\label{ex:chain}
    Consider the chain model $\mathcal{M}$ in Figure \ref{fig:chain} (left). Since $p$ is true in all worlds, and the length of the chain is $k$, a minimal model $k$-bisimilar to $\mathcal{M}$ is a singleton pointed model with a loop (Figure \ref{fig:chain}, right). This is because the loop model preserves all formulas up to depth $k$, cf.\ Proposition~\ref{prop:k-bisim}. However, the standard $k$-contraction of $\mathcal{M}$ is simply $\mathcal{M}$ itself: First note that for all $h \leq k-1$, the formula $\Diamond^{h} \Box \bot$ is true only in world $w_h$ of $\mathcal{M}$ (it expresses the existence of a path of length $h$ to a world from which no world is accessible). Hence any two worlds of $\mathcal{M}$ can be distinguished by a formula $\Diamond^{h} \Box \bot$ of depth $h \leq k-1$. This implies that no two distinct worlds of $\mathcal{M}$ are modally equivalent to modal depth $k$, and hence cannot be part of the same $k$-bisimulation class. Thus, the standard $k$-contraction of $\mathcal{M}$ is $\mathcal{M}$ itself. 
    \end{example}

    We now move to introduce our \emph{rooted $k$-contractions}.
%    Since obtaining a definition of $k$-contractions that guarantees minimality requires to first introduce various technical concepts, we split our exposition as follows.
    First, in this section, we show how to define a notion of a rooted $k$-contraction that guarantees the resulting model to have the smallest number of worlds among any model $k$-bisimilar to the original one (we call this property \emph{world minimality}). Later we then define a stronger notion of rooted $k$-contraction that additionally guarantees the set of edges of the contracted model to be minimal (called \emph{edge minimality}).  In what follows, we fix a constant $k \geq 0$ and a pointed model $\mathcal{M} = (M, w_d)$ with $M = (W,V,R)$. Recall the notion of the depth $\depth{w}$ of a world $w$ (Definition \ref{def:depth}). We now introduce the notion of \emph{bound} of a world.

    \begin{definition}\label{def:bound}
        The \emph{bound} of a world $w$ is $\bound{w} = k - \depth{w}$.
    \end{definition}

    \begin{lemma}\label{lem:bound}
        If $x R_i y$, then $\bound{y} \geq \bound{x} - 1$.
    \end{lemma}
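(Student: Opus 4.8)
The plan is to unfold the definition of bound and reduce the claim to an equivalent statement purely about depths. Since $\bound{w} = k - \depth{w}$ by Definition~\ref{def:bound}, the target inequality $\bound{y} \geq \bound{x} - 1$ is, after substituting, $k - \depth{y} \geq (k - \depth{x}) - 1$, which rearranges to $\depth{y} \leq \depth{x} + 1$. So it suffices to establish that an edge cannot increase depth by more than one.

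First I would dispose of the degenerate case where $\depth{x} = \infty$, i.e.\ $x$ is unreachable from $w_d$. Then $\bound{x} = -\infty$, so $\bound{x} - 1 = -\infty$ and $\bound{y} \geq \bound{x} - 1$ holds trivially, whatever the value of $\bound{y}$. This case is where I expect the only real (and very mild) subtlety to lie, since it requires being careful with arithmetic involving $\infty$; the inequality for depths does not literally hold here, but the inequality for bounds does, so it is cleaner to verify the bound inequality directly in this case rather than passing through the depth reformulation.

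With $\depth{x}$ finite, the depth inequality follows from a standard shortest-path argument. Let $\pi$ be a shortest path from $w_d$ to $x$; by definition of depth its length is $\depth{x}$. Appending the edge $x R_i y$ to $\pi$ yields a path from $w_d$ to $y$ of length $\depth{x} + 1$. Since $\depth{y}$ is the length of the \emph{shortest} path from $w_d$ to $y$, we obtain $\depth{y} \leq \depth{x} + 1$. Translating back through $\bound{\cdot} = k - \depth{\cdot}$ gives $\bound{y} \geq \bound{x} - 1$, completing this case and hence the proof.
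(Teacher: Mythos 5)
Your proof is correct and follows essentially the same route as the paper's: both reduce the claim to $\depth{y} \leq \depth{x} + 1$ via the observation that appending the edge $(x,y)$ to a shortest path to $x$ yields a path to $y$, and then translate back through $\bound{\cdot} = k - \depth{\cdot}$. Your explicit treatment of the unreachable case $\depth{x} = \infty$ is a small extra care the paper omits (and in fact the depth inequality does still hold there under the usual convention $\infty + 1 = \infty$, so even that case could be folded into the main argument).
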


    \begin{proof}
        % \tobo{Is the proof environment defined in the AiML latex style files, or is it yours? It seems it's their, so they are the ones who wants proofs to be typeset in italics?} 
        $\depth{y}$ is the length of the shortest path from the designated world to $y$.
        Such a path either goes through the edge $(x,y) \in R_i$, or there is a shorter path to $y$. Hence, $\depth{y} \leq \depth{x} + 1 \Leftrightarrow k - \bound{y} \leq k - \bound{x} + 1 \Leftrightarrow \bound{y} \geq \bound{x} - 1$.
        % Assume by contradiction that $\bound{y} < \bound{x} - 1$, \ie $\depth{y} > \depth{x} + 1$.
        % By Definition \ref{def:depth}, we get that $x$ is not contained in the shortest path from $w_d$ to $y$, as otherwise $\depth{y} = \depth{x} + 1$. But this implies that $\depth{x} + 1 > \depth{y}$.
        % \tobo{In math, one should always do a direct proof whenever possible. It could go like this: $d(x)$ denotes the length of the shortest path from the designated world to $x$. The shortest path from the designated world to $y$ either goes through the edge $(x,y) \in R_i$, or there is a shorter path to $y$. Hence $d(y) \leq d(x) + 1 \Leftrightarrow k - \bound{y} \leq k - \bound{x} + 1 \Leftrightarrow \bound{x} \leq \bound{y} +1 \Leftrightarrow \bound{x} - 1 \leq \bound{y}$.
        % }
    \end{proof}

    \begin{figure}
        \centering
        \begin{tikzpicture}[>=latex,node distance=1.7em and 1.5em,style={font=\sffamily\footnotesize},yscale=1]
    \begin{scope}[inner sep=0.5pt]
        \node[poinetdworld,                    label={above right:$w_d{:}p$}] (wd) {};
        \node[world,        below left =of wd, label={below left :$w_1{:}p$}] (w1) {};
        \node[world,        below right=of wd, label={below right:$w_2{:}p$}] (w2) {};
        \node[world,        below=of w1,       label={below left :$w_3{:}q$}] (w3) {};
        \node[world,        below=of w2,       label={below right:$w_4{:}q$}] (w4) {};
    \end{scope}

    \path
        (wd) edge[->]  (w1)
        (wd) edge[->]  (w2)
        (w1) edge[<->] (w3)
        (w2) edge[->]  (w4)
        (w3) edge[->]  (w2)
        (w2) edge[->, loop, out=110, in=30, looseness=14] (w2)
    ;

    \begin{scope}[on background layer]
        \path[thick, dotted, color=gray]
            let
                \p{wd} = (wd),
                \p{w1} = ($(w1) - (0.6,0)$),
                \p{w2} = ($(w2) + (0.6,0)$),
                \p{w3} = ($(w3) - (0.6,0)$),
                \p{w4} = ($(w4) + (0.6,0)$)
            in
                (\x{w1}, \y{wd}) edge (\x{w2}, \y{wd})
                (\x{w1}, \y{w1}) edge (\x{w2}, \y{w1})
                (\x{w3}, \y{w3}) edge (\x{w4}, \y{w3})
                ;
    \end{scope}

    %%% TABLE %%%

    \begin{scope}[inner sep=0pt]
        \coordinate (col_1) at (-6.2, 0  );
        \coordinate (col_2) at (-5.2, 0  );
        \coordinate (col_3) at (-4.0, 0  );
        \coordinate (col_4) at (-2.8, 0  );
        \coordinate (row_1) at ( 0  , 0.5);

        \node (d)  at ($(wd) + (col_1) + (row_1)$) {$d$  };
        \node (b1) at ($(wd) + (col_2) + (row_1)$) {$b\ (k {=} 1)$};
        \node (b2) at ($(wd) + (col_3) + (row_1)$) {$b\ (k {=} 2)$};
        \node (b3) at ($(wd) + (col_4) + (row_1)$) {$b\ (k {=} 3)$};

        \node                 (d0)  at ($(wd) + (col_1)$) {$0$ };
        \node[below = of d0]  (d1)                        {$1$ };
        \node[below = of d1]  (d2)                        {$2$ };

        \node                 (b10) at ($(wd) + (col_2)$) {$1$ };
        \node[below = of b10] (b11)                       {$0$ };
        \node[below = of b11] (b12)                       {$-1$};

        \node                 (b20) at ($(wd) + (col_3)$) {$2$ };
        \node[below = of b20] (b21)                       {$1$ };
        \node[below = of b21] (b22)                       {$0$ };

        \node                 (b30) at ($(wd) + (col_4)$) {$3$ };
        \node[below = of b30] (b31)                       {$2$ };
        \node[below = of b31] (b32)                       {$1$ };
        
        \path[thin]
            let
                \p1 = ($(d)  !0.5!(d0)  - ( 0.3, 0  )$),
                \p2 = ($(b3) !0.5!(b30) + ( 0.6, 0  )$),
                \p3 = ($(d)  !0.5!(b1)  + (-0.1, 0.2)$),
                \p4 = ($(d2) !0.5!(b12) - ( 0.1, 0.3)$),
                \p5 = ($(b1) !0.5!(b2)  + ( 0  , 0.2)$),
                \p6 = ($(b12)!0.5!(b22) - ( 0  , 0.3)$),
                \p7 = ($(b2) !0.5!(b3)  + ( 0  , 0.2)$),
                \p8 = ($(b22)!0.5!(b32) - ( 0  , 0.3)$)
            in
                (\p1) edge (\p2)
                (\p3) edge (\p4)
                (\p5) edge (\p6)
                (\p7) edge (\p8)
                ;
    \end{scope}
\end{tikzpicture}
        \caption{Depth ($d$) and bound ($b$) of worlds for $k = 1, 2$ and $3$.
        % \tobo{It can also be an advantage to make figure texts as self-explanatory as possible. So e.g. ``Depth ($d$) and bound ($b$) of worlds''. And then I might put $k=1$, $k=2$ and $k=3$ into the table, since now there is no clear connection between the 1 in $b_1$ and the 1 in $k=1$. It's clear when reading the text, but the more one can understand of the paper from the figures, the better.}
        }
        \label{fig:bound}
    \end{figure}
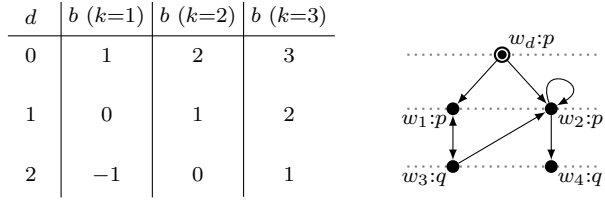

    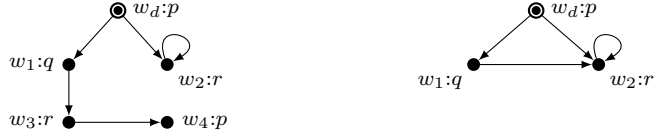
\begin{figure}
        \centering
        \subfloat{
            \begin{tikzpicture}[y = 9.5em,>=latex,node distance=1.7em and 1.5em,style={font=\sffamily\footnotesize}, baseline=(wd)]
    \begin{scope}[inner sep=0.5pt,label distance=0.25em]
        \node[poinetdworld,                    label={      right:$w_d{:}p$}] (wd) {};
        \node[world,        below left =of wd, label={      left :$w_1{:}q$}] (w1) {};
        \node[world,        below right=of wd, label={below right:$w_2{:}r$}] (w2) {};
        \node[world,        below=of w1,       label={      left :$w_3{:}r$}] (w3) {};
        \node[world,       below=of w2,       label={ right:$w_4{:}p$}] (w4) {};
    \end{scope}

    \path
        (wd) edge[->] (w1)
        (wd) edge[->] (w2)
        (w1) edge[->] (w3)
        (w2) edge[->, loop, out=110, in=30, looseness=14] (w2)
        (w3) edge[->] (w4)
    ;
\end{tikzpicture}
            \label{fig:dummy-model}
        }
        \hspace{2cm}
        \subfloat{
            \begin{tikzpicture}[y = 9.5em,>=latex,node distance=1.7em and 2em,style={font=\sffamily\footnotesize}, baseline=(wd)]
    \begin{scope}[inner sep=0.5pt,label distance=0.25em]
        \node[poinetdworld,                    label={      right:$w_d{:}p$}] (wd) {};
        \node[world,        below left =of wd, label={      below left :$w_1{:}q$}] (w1) {};
        \node[world,        below right=of wd, label={below right:$w_2{:}r$}] (w2) {};
    \end{scope}

    \path
        (wd) edge[->] (w1)
        (wd) edge[->] (w2)
        (w1) edge[->] (w2)
        (w2) edge[->, loop, out=110, in=30, looseness=14] (w2)
    ;
\end{tikzpicture}
            \label{fig:dummy-contr}
        }
        \caption{Two $2$-bisimilar pointed models: $\mathcal{N}_1$ (left) and $\mathcal{N}_2$ (right).}
        \label{fig:dummy}
    \end{figure}
%    We begin by considering the following example.
%    Let $\mathcal{N}_1 = (M, w_d)$ and $\mathcal{N}_2 = (M', w'_d)$ be the pointed models in Figure \ref{fig:dummy} (from left to right).

    \begin{example}\label{exam:bound} 
        The notion of \emph{bound} of a world will play a key role in the definition of rooted $k$-contractions.
        Figure \ref{fig:bound} shows an example of bound of worlds for $k = 1, 2$ and $3$. Now    
        consider the pointed models $\mathcal{N}_1$ and $\mathcal{N}_2$ of Figure \ref{fig:dummy}.
        Taking the standard $2$-contraction of $\mathcal{N}_1$ would result in the model $\mathcal{N}_1 \upharpoonright 2$, \ie the model $\mathcal{N}_1$ with $w_4$ removed,      
        by a similar argument as in Example~\ref{ex:chain}.
        %    in an equal pointed model (up to isomorphism).
        %One can show this as we previously did in the chain example, namely by proving that any two distinct worlds of $\mathcal{N}$ are not $2$-bisimilar.
        However,  $\mathcal{N}_1 \upharpoonright 2$  is not world minimal among models $2$-bisimilar to $\mathcal{N}_1$. That is true for $\mathcal{N}_2$, however. Any model $2$-bisimilar to $\mathcal{N}_1$ must have at least three worlds (one for each atomic proposition), which is exactly what $\mathcal{N}_2$ has. The $2$-bisimulation between $\mathcal{N}_1$ and $\mathcal{N}_2$ is defined by:
        $Z_2 = \{(w_d, w'_d)\}$; $Z_1 = Z_2 \cup \{(w_1, w'_1), (w_2, w'_2)\}$; and $Z_0 = Z_1 \cup \{(w_3, w'_2)\}$. 
        % not a world-minimal $2$-contraction of $\mathcal{N}$. Indeed, such a contraction is $\mathcal{N}'$. The fact that $\mathcal{N}' \bisim_2 \mathcal{N}$ follows since the following sequence of binary relations is a $2$-bisimulation between $\mathcal{N}$ and $\mathcal{N}'$: $Z_2 = \{(w_d, w'_d)\}$; $Z_1 = Z_2 \cup \{(w_1, w'_1), (w_2, w'_2)\}$; and $Z_0 = Z_1 \cup \{(w_3, w'_2)\}$. Finally, clearly $\mathcal{N}'$ is a world-minimal $2$-contraction of $\mathcal{N}$.
        Notice that $\mathcal{N}_2$ has been obtained by $\mathcal{N}_1$ by redirecting all
        incoming edges of $w_3$ to $w_2$ and deleting the worlds that are no longer reachable from the designated world. 
        In standard bisimulation contractions, we simply identify worlds that are bisimilar, meaning that we can merge them into one world. In the case of $\mathcal{N}_1$ and $\mathcal{N}_2$, we cannot just trivially merge $w_2$ and $w_3$ into one world, as they don't have the same successor worlds (we might be left with $w_4$ as a successor to $w_2$, which destroys $2$-bisimilarity). The idea of redirecting edges rather than merging worlds forms a crucial part of the intuition behind our rooted $k$-contractions. The point is that $w_2$ can be used as a ``representative'' for $w_3$ when we perform the contraction, and hence we can get rid of $w_3$. The reason that $w_2$ works as a representative for $w_3$ is that $w_3$ 
        is at depth $2$, so has bound $0$. Intuitively this means that $w_3$ can be represented by any world that it is $0$-bisimilar to, as to maintain $k$-bisimilarity at the designated worlds, we only need to require $(k-d)$-bisimilarity of the worlds at depth $d$, \ie worlds of bound $b$ only need to be $b$-bisimilar. These intuitions are made formally precise in the following. 
    %   The intuition from this example is formalised by the next lemma.     
        %Importantly, from this we observe that some worlds of a pointed model can be safely deleted (in the sense we just described) without compromising $k$-bisimilarity wrt.\ the original pointed model.
        %We now make this idea more precise. We begin with some preliminary definitions.
    \end{example}

%    The next lemma formalizes our previous intuition on how some worlds can be safely deleted from a pointed model.

            % First, we show that $Z_k \subseteq \dots \subseteq Z_0$.
            % Let $0 \leq h < k$ and let $(w, w') \in Z_{h+1}$.
            % We need to show that $(w, w') \in Z_h$.
            % If $(w, w') \in Z'_{h+1}$, then we have that $w \bisim_{h+1} w', \bound{w} \geq h+1$ and $\bound{w'} \geq h+1$, which implies $w \bisim_h w', \bound{w} \geq h$ and $\bound{w'} \geq h$ and, thus, that $(w, w') \in Z'_h \subseteq Z_h$.
            % The only remaining case is $(y, x) \in Z_{h+1}$.
            % By construction of $Z_{h+1}$, we know that $\bound{y} \geq h+1$, which implies $\bound{y} \geq h$ and, thus, that $(y, x) \in Z_h$.
            % %, since $\bound{w} \geq h+1$ implies $\bound{w} \geq h$ for any $w \in W$ and $0 \leq h < k$.

            % We now show [atom].
            % First, let $(w, w') \in Z'_0$.
            % Then, by construction of $Z'_0$, we know that $w \bisim_0 w'$, namely that $w$ and $w'$ satisfy the same propositional atoms.
            % It now remains to show the same for $(y, x) \in Z_0 \setminus Z'_0$.
            % Since $x \bisim_{\bound{y}} y$ and $\bound{y} \geq 0$, we also know that $x \bisim_0 y$, namely that $x$ and $y$ satisfy the same propositional atoms.

    \begin{lemma}\label{lem:edges-redirection}
        Let $k \geq 0$, let $(M, w_d)$ be a pointed model, with $M = (W, R, V)$ and let $x,y \in W \setminus \{w_d\}$ be two distinct worlds such that $\bound{x} \geq \bound{y} \geq 0$ and $x \bisim_{\bound{y}} y$. Let $(M', w_d)$, with $M' = (W', R', V')$, be the pointed model obtained by deleting $y$ from $(M, w_d)$ and redirecting its incoming edges to $x$. More precisely:
       % \begin{compactitem}
          %  \item 
            $W' = W \setminus \{y\}$;
            %\item 
            $R'_i = (R_i \cap (W' \times W')) \cup \{(w, x) \mid w R_i y\}$;
            %\item 
            $V'(p) = V(p) \cap W'$, for all $p \in \atomSet$.
        %\end{compactitem}
        %
        %\noindent 
        Then $(M, w_d) \bisim_k (M', w_d)$.
    \end{lemma}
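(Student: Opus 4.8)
The plan is to exhibit an explicit $k$-bisimulation between $(M,w_d)$ and $(M',w_d)$. Write $r\colon W \to W'$ for the \emph{representative map} that fixes every world except $y$ and sends $y$ to $x$ (so $r(w)=w$ for $w\neq y$ and $r(y)=x$); intuitively $r$ records where each world of $M$ ``goes'' in $M'$. For $0 \le h \le k$ I would set
\[
  Z_h = \{(a,b)\in W\times W' : a \bisim_h b,\ \bound{a}\ge h,\ \bound{b}\ge h\},
\]
where $a\bisim_h b$ denotes $h$-bisimilarity \emph{inside} $M$. The intended witnesses are the identity pairs $(w,w)$ for $w\in W'$ together with the pair $(y,x)$, which lies in $Z_h$ for every $h\le\bound{y}$ precisely because $x\bisim_{\bound{y}}y$. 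The bound side-conditions are there to certify that the bisimulation game can only ever visit a world at a level not exceeding its bound, which is what keeps the hypothesis $x\bisim_{\bound y}y$ applicable whenever it is needed.

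First I would dispatch the bookkeeping. Since $\bisim_h$ (on worlds of $M$) is an equivalence relation and is downward closed in $h$, each $Z_h$ is well defined, $Z_k\subseteq\cdots\subseteq Z_0$, and $(w_d,w_d)\in Z_k$ because $w_d\in W'$, $w_d\bisim_k w_d$ and $\bound{w_d}=k$. For [atom], a pair $(a,b)\in Z_0$ satisfies $a\bisim_0 b$, so $a$ and $b$ agree on atoms in $M$; and since $b\in W'$ we have $b\in V'(p)$ iff $b\in V(p)$, giving the required equivalence.

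The core of the argument is [forth$_h$] (and the symmetric [back$_h$]). Given $(a,b)\in Z_{h+1}$ and an edge $a R_i c$ in $M$, the forth clause of the bisimulation witnessing $a\bisim_{h+1}b$ produces $d\in W$ with $b R_i d$ and $c\bisim_h d$; I would then take the successor $r(d)\in W'$ on the $M'$ side. Two things must be checked. (i) The edge is present in $M'$: if $d\neq y$ then $bR_i d$ with $b,d\in W'$ yields $b R'_i d$, while if $d=y$ the redirection clause of $R'_i$ turns $b R_i y$ into $b R'_i x$, and $x=r(d)$. (ii) The target pair $(c,r(d))$ lies in $Z_h$: here $c\bisim_h r(d)$ holds because either $d\neq y$ (so $r(d)=d$) or $d=y$, in which case $\bound{b}\ge h+1$ together with $bR_i y$ force $\bound{y}\ge h$ by Lemma~\ref{lem:bound}, whence $y\bisim_h x$ and, by transitivity, $c\bisim_h x$, i.e.\ $c\bisim_h r(d)$; the bound conditions $\bound{c}\ge h$ and $\bound{r(d)}\ge h$ follow from $aR_ic$, $bR_id$, Lemma~\ref{lem:bound}, and $\bound{x}\ge\bound{y}$. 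The [back$_h$] case is analogous, splitting on whether the $M'$-edge out of $b$ is an original edge or a redirected one (the latter meaning the edge is $b R'_i x$ arising from $b R_i y$).

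The main obstacle is exactly the situation in which the game follows an edge into $y$, a world that no longer exists in $M'$: one must simulate it by the edge redirected to $x$ while simultaneously re-establishing the invariant that the two sides stay $h$-bisimilar. This is where the hypothesis $x\bisim_{\bound{y}}y$ enters, and the purpose of the bound side-conditions in $Z_h$ is to guarantee that whenever such an edge is taken the current level $h$ satisfies $h\le\bound{y}$, so that $x\bisim_{\bound{y}}y$ may legitimately be invoked at level $h$ (and $y\bisim_h x$ composed with $c\bisim_h y$). Everything else reduces to routine propagation of the bound inequalities via Lemma~\ref{lem:bound}.
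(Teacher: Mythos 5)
Your proposal is correct and follows essentially the same route as the paper: the identical relation $Z_h = \{(a,b) \mid a \bisim_h b,\ \bound{a}\ge h,\ \bound{b}\ge h\}$, the same case split on whether the matched successor is $y$ (forth) or whether the $M'$-edge into $x$ is original or redirected (back), and the same use of Lemma~\ref{lem:bound} to propagate the bound conditions so that $x \bisim_{\bound{y}} y$ can be invoked at level $h \le \bound{y}$. The only cosmetic difference is your explicit representative map $r$, which the paper handles by an inline case distinction.
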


    \begin{proof}
        To avoid confusion, we remark that, for all $w$, $\bound{w}$ refers to the bound that $w$ has in $\mathcal{M}$ (and not in $\mathcal{M}'$).
        Similarly, $w \bisim_h w'$ means $(M, w) \bisim_h (M, w')$.
        For all $0 \leq h \leq k$, let $Z_h \subseteq W \times W'$ be the following binary relation:
        \begin{equation*}
            Z_h = \{(w, w') \in W \times W' \mid w \bisim_h w', \bound{w} \geq h \text{ and } \bound{w'} \geq h\}
        \end{equation*}
        We now show that $Z_k, \dots, Z_0$ is a $k$-bisimulation between $\mathcal{M}$ and $\mathcal{M}'$.
        Clearly, $Z_k \subseteq \dots \subseteq Z_0$. Also, trivially, $(w_d,w_d) \in Z_k$ (recalling that $w \bisim_h w'$ means $(M, w) \bisim_h (M, w')$).
%        Also, notice that $w'_d = w_d$, since $w_d$ is the only world in $W$ such that $\bound{w_d} = k$ and, thus, there exists no $x \in W$ such that $x \neq w_d$, $\bound{x} \geq k$ and $x \bisim_{k} w_d$.
 %       Since $w_d \bisim_k w_d$ and $\bound{w_d} = k$, we get $(w_d, w_d) \in Z_k$.
         We get [atom] since $(w, w') \in Z_0$ implies $w \bisim_0 w'$ and thus $w \in V(p)$ iff $w' \in V'(p)$ (by [atom] of Definition \ref{def:k-bisim}).
 
        We now show [forth$_h$].
        Let $h < k$, $(w, w') \in Z_{h+1}$ and $w R_i v$.
        We need to find a $v' \in W'$ such that $w' R'_i v'$ and $(v, v') \in Z_h$.
        Since $(w, w') \in Z_{h+1}$, we have $w \bisim_{h+1} w'$, $\bound{w} \geq h+1$ and $\bound{w'} \geq h+1$.
        From $w \bisim_{h+1} w'$, there exists $u \in W$ such that $w' R_i u$ and $v \bisim_h u$.
        Since $w R_i v$, $w' R_i u$, $\bound{w} \geq h+1$ and $\bound{w'} \geq h+1$, by Lemma \ref{lem:bound}, we get $\bound{v} \geq h$ and $\bound{u} \geq h$.
        %Notice that $w' \neq y$.
        We have two cases.
        \begin{inparaenum}
            \item If $u \neq y$, then by construction of $R'_i$, we get $w' R'_i u$.
                  Since $v \bisim_h u$, $\bound{v} \geq h$ and $\bound{u} \geq h$, letting $v' = u$ we get $(v, v') \in Z_h$.
            \item If $u = y$, then by construction of $R'_i$, we get $w' R'_i x$.
                  From % $\bound{x} \geq 
                  $\bound{y} = \bound{u} \geq h$ and $x \bisim_{\bound{y}} y$, we get $x \bisim_h y$.
                  Since $v \bisim_h y \bisim_h x$, $\bound{v} \geq h$ and $\bound{y} \geq h$, letting $v' = x$ we get $(v, v') \in Z_h$.
        \end{inparaenum}
        This concludes [forth$_h$].

        Now for [back$_h$].
        Let $h < k$, $(w, w') \in Z_{h+1}$ and $w' R'_i v'$.
        We need to find $v$ such that $w R_i v$ and $(v, v') \in Z_h$.
        Since $(w, w') \in Z_{h+1}$, we have $w \bisim_{h+1} w'$, $\bound{w} \geq h+1$ and $\bound{w'} \geq h+1$.
        %Notice that $w' \neq y$ and $v' \neq y$.
        We have two cases.
        \begin{inparaenum}
            \item\label{itm:back-1} If $v' \neq x$, then by construction of $R'_i$, we get $w' R_i v'$.
                  Since $w \bisim_{h+1} w'$, there exists $v \in W$ such that $w R_i v$ and $v \bisim_h v'$.
                  As in [forth$_h$], Lemma \ref{lem:bound} gives $\bound{v} \geq h$ and $\bound{v'} \geq h$.
                  Thus, $(v, v') \in Z_h$.
            \item If $v' = x$, then by construction of $R'_i$, we get $w' R_i x$ or $w' R_i y$.
                  If $w' R_i x$, we can reason as in (\ref{itm:back-1}).
                  If $w' R_i y$, pick $v$ with $w R_i v$ and $v \bisim_h y$.
                  As before, $\bound{v} \geq h$ and $\bound{y} \geq h$.
                  Since $\bound{x} \geq \bound{y} \geq h$ and $x \bisim_{\bound{y}} y$, we get $x \bisim_h y$, and thus $v \bisim_h x$.
                  Hence $(v, x) \in Z_h$, as required.
        \end{inparaenum}
        %This concludes [back$_h$].
    \end{proof}

    The lemma %delineates precise conditions that identify which worlds can be safely deleted from a pointed model.
    %Namely,
    tells us that if we're only interested in preserving $k$-bisimilarity,  
    a world $y$ can be deleted from a model if there exists a distinct world $x$ such that $\bound{x} \geq \bound{y}$ and $x \bisim_{\bound{y}} y$. This leads us to the following definition.
%    Intuitively, since by redirecting the edges of $y$ to $x$ we obtain a new pointed model that is $k$-bisimilar to the original one, we can think of $x$ as a ``representative'' of $y$.
%    This leads to the following definition.

    \begin{definition}\label{def:representative}\label{def:max-repr}
        Let $x,y$ be two worlds with non-negative bound. We say that $x$ \emph{represents} $y$, denoted by $x \succeq y$, iff $\bound{x} \geq \bound{y}$ and $x \bisim_{\bound{y}} y$. If furthermore $\bound{x} > \bound{y}$, we say that $x$ \emph{strictly represents} $y$, denoted by $x \succ y$. 
%        We say that $x$ \emph{strictly represents} $y$, denoted by $x \succ y$, iff $\bound{x} > \bound{y}$ and $x \bisim_{\bound{y}} y$, and that $x$ \emph{represents} $y$, 
        % \ale{should we use $\bound{x} \geq \bound{y}$ and $x \bisim_{\bound{y}} y$?}
        The set of \emph{maximal representatives} of $W$ is the set of worlds $\maxRepr{W} = \{x \in W \mid \bound{x} \geq 0 \textnormal{ and } \neg \exists y \in W (y \succ x)\}$.
        We say that a world $x$ is a \emph{maximal representative} of $y$ if $x \in \maxRepr{W}$ and $x \succeq y$. 
        %The maximal elements of $W$ are also called the \emph{maximal representatives} of $W$.
    \end{definition}
    % For instance, in the pointed model in Figure \ref{fig:chain}, we have that $w_0 \succ w_i$ for all $0 < i \leq k$. In fact, for all such $i$, we have that $b_k(w_0) > b_k(w_i) = k-i$ and $w_0 \bisim_{k-i} w_i$.
   % In the pointed model of Figure \ref{fig:bound}, if $k = 1$, then $w_d \succ w_1$ and $w_d \succ w_2$, since $b_1(w_d) > b_1(w_1) = b_1(w_2) = 0$, $w_d \bisim_0 w_1$ and $w_d \bisim_0 w_2$.
%    Lemma \ref{lem:edges-redirection} also identifies which worlds can \emph{not} be safely deleted from a pointed model. Using the terminology introduced in Definition \ref{def:representative}, we observe that if a world $x$ of a pointed model is not strictly represented by any other distinct world $x'$, then $x$ can not be safely deleted while preserving $k$-bisimilarity.
%    This leads to the following definition.
 %   \begin{definition}
    %  \end{definition}
    % \tobo{Probably $X^{max}$ is so important that it should be in a definition environment?}
    % \ale{I added the condition that $\bound{x} \geq 0$, since $\succ$ is defined for worlds with non-negative bound, which is a necessary requirement since an $h$-bisimulation class with $h<0$ is meaningless and undefined.}
    % (though, strictly speaking, one would normally call such elements \emph{largest elements} rather than \emph{maximal elements}).
   % Hence for $k=1$, the only maximal representative is $w_d$ ($w_3$ and $w_4$ have negative bound and are hence by definition never maximal). 
    Note that every world $w\in W$ with $b(w) \geq 0$ has at least one maximal representative: Any chain $w \prec w' \prec w'' \prec \cdots$ is finite (since $W$ is finite) and must hence end in a maximal representative of $w$.    
    %    Since $w \succeq w$, either $$ is its own maximal representative 
    We are going to build our rooted $k$-contractions on the maximal representatives,  
    the intuition being that all other worlds can be represented by one of these and hence be deleted, cf.\ Lemma~\ref{lem:edges-redirection}. 
%    Notice that, since $W \neq \varnothing$, then $\maxRepr{W} \neq \varnothing$.
 %   To show this, let $x \in W$.
  %  Either $x \in \maxRepr{W}$ (and we're done), or there exists $x' \in W$ such that $x' \succ x$.
  %  Again, either $x' \in \maxRepr{W}$ or there is $x'' \in W$ such that $x'' \succ x'$.
  %  Since $\bound{w} \leq k$ for all $w \in W$, we must find $x^n \in W$ such that $x^n \in \maxRepr{W}$ by repeating this argument for $n \leq k$ times.
    
%    The maximal elements of $W$ are those worlds that can not be safely deleted while preserving $k$-bisimilarity.
 %   Thus, intuitively, they constitute the cornerstone upon which we build rooted $k$-contractions.
    % Notice that $\succ$ is a \emph{strict preorder} (since it is irreflexive and transitive) and that $\succeq$ is a \emph{partial order} (since it is reflexive, transitive and antisymmetric). From this, together with the fact that we consider finite models, we get that $\maxRepr{X} \neq \varnothing$ whenever $X \neq \varnothing$.
    % \ale{Prove that $\maxRepr{X} \neq \varnothing$ whenever $X \neq \varnothing$. If you can do it without talking about strict preorders, than even better.}
    %we know that in any non-empty set of worlds there exists at least one maximal element w.r.t. $\succeq$. 
    %The $\succeq$-maximal worlds of a state $s$ are called \emph{maximal representatives}, i.e.,  
%     Before giving some examples, we show some properties of maximal representatives.

    \begin{proposition}\label{prop:max-reprs}\label{lem:same-bound-repr}
        For any pointed model $((W, R, V), w_d)$, we have:
        % \ale{can we omit this sentence?} \tobo{Yes, when we don't need to refer to $W$, $R$ and $V$, we can omit mentioning that $M=(W,R,V)$. However, I would not delete the intro sentence altogether. Maybe I would say: ``For any pointed model $(M,w_d)$, we have:''. That's quite short.}
        \begin{inparaenum}[1)]
            \item\label{itm:max-reprs-1} $w_d \in \maxRepr{W}$;
            \item\label{itm:max-reprs-2} if $w \succ v$, then $v \notin \maxRepr{W}$; 
            \item\label{itm:max-reprs-3} if $\bound{w} < 0$, then $w \notin \maxRepr{W}$;
            \item\label{itm:max-reprs-4} if $w,v \in \maxRepr{W}$ and $w \bisim_{\bound{w}} v$ then $\bound{w} = \bound{v}$.
        \end{inparaenum}
    \end{proposition}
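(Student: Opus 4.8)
The plan is to dispatch parts 1--3 directly from the definitions and then reduce part 4 to part 2 using the symmetry of bounded bisimilarity. For part 1, I would observe that the designated world has depth $\depth{w_d} = 0$ (the empty path), so $\bound{w_d} = k$. This is the maximal possible bound, since every world $w$ satisfies $\bound{w} = k - \depth{w} \leq k$; in particular $\bound{w_d} = k \geq 0$, and no world $y$ can have $\bound{y} > k = \bound{w_d}$, so nothing strictly represents $w_d$ and hence $w_d \in \maxRepr{W}$. Part 2 is a pure unfolding of the definition of $\maxRepr{W}$: if $w \succ v$, then $w$ witnesses $\exists y\,(y \succ v)$, directly contradicting the defining clause $\neg \exists y\,(y \succ v)$ for membership, so $v \notin \maxRepr{W}$. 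Part 3 is equally immediate, since membership in $\maxRepr{W}$ explicitly requires $\bound{w} \geq 0$.

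The one part carrying any content is part 4, whose key idea is symmetry. I would argue by contradiction: assume $\bound{w} \neq \bound{v}$, and by the symmetric roles of $w$ and $v$ assume without loss of generality $\bound{v} > \bound{w}$. We are given $w \bisim_{\bound{w}} v$; since bounded bisimilarity is symmetric (the [forth$_h$] and [back$_h$] clauses of Definition \ref{def:k-bisim} are mirror images, so inverting every relation $Z_h$ of a $k$-bisimulation again yields a $k$-bisimulation), we obtain $v \bisim_{\bound{w}} w$. Combined with $\bound{v} > \bound{w} \geq 0$, this is precisely the statement $v \succ w$. Applying part 2 then forces $w \notin \maxRepr{W}$, contradicting the hypothesis $w \in \maxRepr{W}$. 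Hence $\bound{w} = \bound{v}$.

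The only subtlety worth flagging is well-definedness: the relations $\succeq$ and $\succ$ are defined only between worlds of non-negative bound, but since $w, v \in \maxRepr{W}$ both have non-negative bound, the derived assertion $v \succ w$ is meaningful. I expect the main (and still minor) obstacle to be making the symmetry of $\bisim_{\bound{w}}$ explicit, since part 4 silently relies on it to convert the given, unoriented hypothesis $w \bisim_{\bound{w}} v$ into the oriented instance $v \succeq w$ required to invoke part 2; the numerical reasoning with bounds is otherwise trivial.
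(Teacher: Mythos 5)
Your proposal is correct and follows essentially the same route as the paper: parts 1--3 are read off the definitions, and part 4 is a proof by contradiction that turns the hypothesis into a strict representation and then invokes maximality via part 2. The one point to tighten is the ``without loss of generality'' in part 4. The hypothesis $w \bisim_{\bound{w}} v$ is not symmetric in $w$ and $v$ --- the subscript is $\bound{w}$, not $\bound{v}$ --- so the two cases $\bound{v} > \bound{w}$ and $\bound{w} > \bound{v}$ are not interchangeable by renaming, and they in fact require different (equally trivial) ingredients. The case you treat, $\bound{v} > \bound{w}$, needs only symmetry of $\bisim_{\bound{w}}$ to read the hypothesis as $v \bisim_{\bound{w}} w$ and conclude $v \succ w$. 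The case you omit, $\bound{w} > \bound{v}$, instead needs the downward monotonicity noted after Definition~\ref{def:k-bisim} (that $\bisim_{h}$ implies $\bisim_{h'}$ for $h' \leq h$) to pass from $w \bisim_{\bound{w}} v$ to $w \bisim_{\bound{v}} v$ and conclude $w \succ v$; symmetry alone does not deliver this. The paper makes the mirror-image choice, treating $\bound{w} > \bound{v}$ explicitly via monotonicity and declaring the other case ``symmetric,'' so both write-ups wave at the complementary case; in your version the omitted case is a one-line fix, so this is an imprecision rather than a failure, but the phrase ``symmetric roles of $w$ and $v$'' should be replaced by an explicit appeal to monotonicity in the second case.
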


    \begin{proof}
        Item \ref{itm:max-reprs-1} follows since $w_d$ has bound $\bound{w_d} = k$ and there can not be any world with a greater bound. Items \ref{itm:max-reprs-2} and \ref{itm:max-reprs-3} immediately follow by definition of $\maxRepr{W}$. Item \ref{itm:max-reprs-4} is by contradiction: Suppose that $w \bisim_{\bound{w}} v$ and $\bound{w} > \bound{v}$ (the case $\bound{v} > \bound{w}$ being symmetric).
        Since $w \bisim_{\bound{w}} v$, it follows that $w \bisim_{\bound{v}} v$, which implies that $w \succ v$, contradicting the fact that $v \in \maxRepr{W}$.  
    \end{proof}

%    \begin{lemma}
 %       For any $x,y \in \maxRepr{W}$, if $x \bisim_{\bound{x}} y$, then, $\bound{x} = \bound{y}$.
        % Let $((W, R, V), w_d)$ be a state and let $x,y \in \maxRepr{W}$ be two worlds such that $x \bisim_{\bound{x}} y$. Then, $\bound{x} = \bound{y}$.
        %\ale{this is similar to Prop. \ref{prop:k-bisim-2}. Unify results.}
%    \end{lemma}
 %   \begin{proof}
 %       By contradiction, assume that $x \bisim_{\bound{x}} y$ and $\bound{x} > \bound{y}$ (the case where $\bound{y} > \bound{x}$ is analogous).
  %      Since $x \bisim_{\bound{x}} y$, it follows that $x \bisim_{\bound{y}} y$, which implies that $x \succ y$, contradicting the fact that $y \in \maxRepr{W}$.
  %  \end{proof}

    \begin{example}\label{ex:max-reprs}
        Let $\mathcal{M}$ the pointed model in Figure \ref{fig:bound}, let $W$ be its set of worlds and let $k = 2$. Using Proposition~\ref{prop:k-bisim}, we can show that for all worlds $w,v \in W$, if $\bound{w} > \bound{v}$ then $w \not\bisim_{\bound{v}} v$: $w_3$ and $w_4$ of bound 0 are not propositionally equivalent to any world of greater bound (they are the only ones satisfying $p$); $w_1$ and $w_2$ of bound 1 both satisfy $\D{}q$ of modal depth 1, which is not satisfied by the only world of greater bound, $w_d$. Hence, all worlds are maximal representatives, \ie $\maxRepr{W} = W$. We immediately get $\maxRepr{W} = W$ for $k=3$ as well, since if worlds $w$ and $v$ are not $n$-bisimilar, they are also not $(n+1)$-bisimilar.  
%
%        We show this by appealing to Proposition \ref{prop:k-bisim}.
 %       First, $w_3$ is not $1$-bisimilar to any other world $w$ with greater bound, since they do not even agree on propositional formulas.
  %      The same holds for $w_4$.
   %     Moreover, $w_1$ is not $2$-bisimilar to $w_d$, since the formula $\D{}q$ is true in $w_1$, but not in $w_d$.
   %     And similarly for $w_2$.
  %      Thus, for any two distinct worlds $w, v$ of $\mathcal{M}$, it follows that $w \not\succ v$.
        % In other words, all worlds of $\mathcal{M}$ maximally represent themselves.
   %     From this, we conclude that $\maxRepr{W} = W$.
%
   %     Let now $k = 2$.
   %     Notice that, since we can show once more that for any two worlds $w, v$ such that $\bound{w} > \bound{v}$, we have that $w \not\bisim_{\bound{v}} v$, we conclude again that $\maxRepr{W} = W$.
   %     In fact, $w_3$ and $w_4$ are not $0$-bisimilar to any other world $w$ with greater bound.
   %     Moreover, $w_1$ and $w_2$ are not $1$-bisimilar to $w_d$, since the formula $\D{}q$ is true in $w_1$ and $w_2$, but not in $w_d$.
%
        For $k=1$, Proposition \ref{prop:max-reprs}(\ref{itm:max-reprs-3}) gives
%Let now $k = 1$.
 %       By Proposition \ref{prop:max-reprs}, we immediately have that 
 $w_3, w_4 \notin \maxRepr{W}$, and since $w_d \succ w_1,w_2$ (they satisfy the same atomic propositions), we get $\maxRepr{W} = \{w_d\}$.
 % (by item \ref{itm:max-reprs-3}). Moreover, we have that $w_d \in \maxRepr{W}$ (by item \ref{itm:max-reprs-1}) and, since $w_d \succ w_1$ and $w_d \succ w_2$, that $w_1, w_1 \notin \maxRepr{W}$ (by item \ref{itm:max-reprs-2}). Thus, $\maxRepr{W} = \{w_d\}$.
    \end{example}
    
    % As we anticipated above, worlds of a rooted $k$-contraction are bounded bisimulation classes of worlds of the original pointed model.
    % We showed that we can restrict ourselves to consider such classes only for maximal representatives.
    %In defining world-minimal $k$-contractions, we are interested in a specific kind of bounded bisimulation classes, which we call \emph{representative class}.

    \begin{definition}\label{def:repr-class}
        The \emph{representative class} of a world $w$ is the class $\class{w}{\bound{w}}$, which we denote with the compact notation $\reprClass{w}$.
    \end{definition}
   % When $y \in \reprClass{x}$, we have $y \bisim_{\bound{x}} x$ by definition. If $x, y \in \maxRepr{W}$, we also have $y \bisim_{\bound{y}} x$, since then $b(x) = b(y)$, by Proposition~\ref{lem:same-bound-repr}(\ref{itm:max-reprs-4}). 

%    \begin{lemma}\label{lem:repr-class}
%        Let $x, y \in \maxRepr{W}$ such that $y \in \reprClass{x}$. Then, $x \bisim_{\bound{y}} y$.

 %       \begin{proof}
 %           Directly follows since $x \bisim_{\bound{x}} y$ and $\bound{x} = \bound{y}$ (by Lemma \ref{lem:same-bound-repr}).
%        \end{proof}
%    \end{lemma}

    %We are now ready to give our first definition of rooted $k$-contractions.

    \begin{definition}\label{def:rooted-k-contr-1}
        Let $\mathcal{M} = ((W, R, V), w_d)$ and let $k \geq 0$. The \emph{rooted $k$-contraction} of $\mathcal{M}$ is the pointed model $\rootedContr{\mathcal{M}}{k}{} = ((W', R', V'), \reprClass{w_d})$, where:
        \begin{compactitem}
            \item $W'    = \{\reprClass{x} \mid x \in \maxRepr{W}\}$;
            \item $R'_i  = \{(\reprClass{x}, \reprClass{y}) \mid x,y \in \maxRepr{W}, \exists z (x R_i z$ and $y \bisim_{\bound{x}{-}1} z)$ and $\bound{x} > 0\}$;\footnote{The definition of $R_i'$ doesn't depend on the choice of maximal representatives: 
            If $\reprClass{x} = \reprClass{x'}$ and $\reprClass{y} = \reprClass{y'}$ and $x,x',y,y' \in \maxRepr{W}$, then $b(x) > 0$ iff $b(x') > 0$ and $\exists z(x R_i z \text{ and } y \bisim_{b(x)-1} z)$ iff $\exists z'(x' R_i z' \text{ and } y' \bisim_{b(x')-1} z')$. 
            To prove this, first note that since $\reprClass{x} = \reprClass{x'}$, we get $x \in \reprClass{x'}$ and $x' \in \reprClass{x}$. Similarly for $y$ and $y'$. Since $x \in \reprClass{x'} = [x']_{b(x')}$, we then get $x \bisim_{b(x')} x'$, and hence $b(x) = b(x')$, 
            by Proposition~\ref{prop:max-reprs}(\ref{itm:max-reprs-4}). Hence, $b(x) > 0$ iff $b(x') > 0$. Now suppose $\exists z(x R_i z \text{ and } y \bisim_{b(x)-1} z)$. We need to show that $\exists z'(x' R_i z' \text{ and } y' \bisim_{b(x')-1} z')$. Since $x \bisim_{b(x')} x'$ and $x R_i y$, there exists a $z'$ such that $x' R_i z'$ and $y \bisim_{b(x')-1} z'$. Since 
            $x R_i y$ then $\bound{y} \geq \bound{x} -1$, by Lemma~\ref{lem:bound}. As $\bound{x} = \bound{x'}$, we get $\bound{y} \geq \bound{x'} -1$. Since $y' \in \reprClass{y} = [y]_{b(y)}$, we get $y' \bisim_{\bound{y}} y$ and hence $y' \bisim_{\bound{x'}-1} y$. Combining $y \bisim_{b(x')-1} z'$ and $y' \bisim_{\bound{x'}-1} y$, we get $y' \bisim_{b(x')-1} z'$, as required.}
            \item $V'(p) = \{\reprClass{x} \mid x \in \maxRepr{W} \text{ and } x \in V(p)\}$.\footnote{The definition of $V'(p)$ is well-defined since from $x' \in \reprClass{x}$ we get that $x' \bisim_{\bound{x}} x$ and, thus, $x \in V(p)$ iff $x' \in V'(p)$, by [atom] of Definition \ref{def:k-bisim}.}
        \end{compactitem}
    \end{definition}

    The definition of $R'_i$ requires some explanation.
    At first, one might think that defining the set of $i$-edges as $\{(\reprClass{x}, \reprClass{y}) \mid x,y \in \maxRepr{W}$ and $x R_i y\}$ would be sufficient.
    However, this is not the case. To show this, consider the pointed model $\mathcal{N}_1$ in Figure \ref{fig:dummy} and let $k = 2$. 
%    Let $\bar{\mathcal{N}}$ denote the rooted $2$-contraction of $\mathcal{N}$ that uses the definition of $i$-edges that we just described.
    One can easily show that $\maxRepr{W} = \{w_d, w_1, w_2\}$, hence the rooted $2$-contraction has worlds  
    %    and that the set $\bar{W}$ of worlds of $\bar{\mathcal{N}}$ is 
    $W' = \{\reprClass{w_d}, \reprClass{w_1}, \reprClass{w_2}\}$.
    Since $(w_1, w_2) \notin R$, defining the accessibility relation of the contracted model by $R' =  \{(\reprClass{x}, \reprClass{y}) \mid x,y \in \maxRepr{W}$ and $x R y\}$ would imply $(\reprClass{w_1}, \reprClass{w_2}) \notin R'$. 
    But then the formula $\D{} r$ would be true in $w_1$ and not in $\reprClass{w_1}$, \ie $w_1 \not\bisim_1 \reprClass{w_1}$, implying that $\mathcal{N}_1$ is not even $2$-bisimilar to its own rooted $2$-contraction! With our current definition of $R'$, we actually get that the rooted $2$-contraction of $\mathcal{N}_1$ is exactly the model $\mathcal{N}_2$ of Figure~\ref{fig:dummy} that we in Example~\ref{exam:bound}  showed to be a world-minimal model $2$-bisimilar to $\mathcal{N}_1$.    

    %Conversely, Definition \ref{def:rooted-k-contr-1} is free from this problem, since $w_1, w_2 \in \maxRepr{W}$, $w_1 R w_3$ and $w_2 \bisim_{\bound{w_1}{-}1} w_3$ imply that $(\reprClass{w_1}, \reprClass{w_2})$ is an edge of $\rootedContr{\mathcal{N}}{2}{}$.
    %In fact, notice that the conditions in Definition \ref{def:rooted-k-contr-1} determining whether $(\reprClass{x}, \reprClass{y})$ is an $i$-edge of the $k$-contraction $\rootedContr{\mathcal{M}}{k}{}$ of a pointed model $\mathcal{M}$ ensure that all outgoing edges of each $x \in \maxRepr{W}$ are taken into account, and not only those edges that link $x$ to some $y \in \maxRepr{W}$.
    %Moreover, as we show in Theorem \ref{th:k-bisim-1}, requiring $y \bisim_{\bound{x}{-}1} z$ ensures that $(\mathcal{M}, x) \bisim_{\bound{x}} (\rootedContr{\mathcal{M}}{k}{}, \reprClass{x})$.

    \begin{lemma}\label{lemm:relation-extension}
        Let $\mathcal{M} = ((W,R,V),w_d)$ and 
        $\rootedContr{\mathcal{M}}{k}{} = ((W', R', V'), w'_d)$. For all $i \in \modalitiesSet$ we have $R_i' \supseteq \{(\reprClass{x},\reprClass{y}) \mid x,y \in \maxRepr{W}, x R_i y \text{ and } b(x) > 0 \}$. 
    \end{lemma}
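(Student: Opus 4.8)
The plan is to unfold the definition of $R'_i$ from Definition~\ref{def:rooted-k-contr-1} and observe that the ``direct'' edges on the right-hand side already satisfy the existential witness condition built into $R'_i$, with the witness being the target world itself. So this is essentially an immediate consequence of the definition together with reflexivity of bounded bisimilarity.

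Concretely, I would first fix an arbitrary $i \in \modalitiesSet$ and take an arbitrary element of the right-hand side set, i.e.\ assume $x, y \in \maxRepr{W}$ with $x R_i y$ and $\bound{x} > 0$; the goal is to show $(\reprClass{x}, \reprClass{y}) \in R'_i$. Recalling the definition of $R'_i$, this membership amounts to finding a world $z$ such that $x R_i z$ and $y \bisim_{\bound{x}-1} z$, while the side conditions $x,y \in \maxRepr{W}$ and $\bound{x} > 0$ are already part of our hypotheses. The key step is to instantiate this existential with $z := y$: then $x R_i z$ is literally the assumption $x R_i y$, and $y \bisim_{\bound{x}-1} y$ holds by reflexivity of bounded bisimilarity. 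Since $\bound{x} > 0$ gives $\bound{x} - 1 \geq 0$, the relation $\bisim_{\bound{x}-1}$ is well-defined, and the diagonal relation $\{(w,w) \mid w \in W\}$ (repeated at every level) is trivially a $(\bound{x}-1)$-bisimulation witnessing $y \bisim_{\bound{x}-1} y$. Hence all three defining conditions of $R'_i$ are met, so $(\reprClass{x}, \reprClass{y}) \in R'_i$, as required.

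I do not expect any genuine obstacle here: the only thing that needs verification is reflexivity of $(\bound{x}-1)$-bisimilarity, which is immediate. The real content of the lemma is conceptual rather than technical --- it records that the elaborate definition of $R'_i$ subsumes the naive ``direct-edge'' relation $\{(\reprClass{x},\reprClass{y}) \mid x,y \in \maxRepr{W},\ x R_i y\}$ discussed just before the statement. This containment is what guarantees that the rooted $k$-contraction retains at least all the edges one would expect from the surviving representatives, and it will presumably be invoked later when proving that $\rootedContr{\mathcal{M}}{k}{}$ is $k$-bisimilar to $\mathcal{M}$.
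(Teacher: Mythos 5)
Your proof is correct and follows exactly the paper's argument: unfold the definition of $R'_i$ and witness the existential with $z = y$, using reflexivity of $\bisim_{\bound{x}-1}$. The extra care you take in verifying reflexivity is fine but not needed beyond a remark.
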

    \begin{proof}
        Let $x,y \in \maxRepr{W}$ be such that $x R_i y$ and $b(x) >0$. We need to show that $(\reprClass{x}, \reprClass{y}) \in R'_i$. By definition of $R'_i$, this is true if there exists a $z \in W$ such that $x R_i z$ and $y \bisim_{\bound{x}{-}1} z$.  
        Since $x R_i y$, we can take $z = y$. % and we get that $(\reprClass{x}, \reprClass{y}) \in R'_i$.
    \end{proof}

    \begin{figure}
        \centering
        \subfloat{
            \begin{tikzpicture}[>=stealth',node distance=1.7em and 1.5em,style={font=\sffamily\footnotesize}, baseline=(wd),label distance=0.25em]
    \begin{scope}[inner sep=0.5pt]
        \node[poinetdworld,                    label={right:$w_d{:}p$}] (wd) {};
        \node[world,        below left =of wd, label={left :$w_1{:}p$}] (w1) {};
        \node[world,        below right=of wd, label={[xshift=1mm]right:$w_2{:}p$}] (w2) {};
        \node[world,        below=of w1,       label={left :$w_3{:}q$}] (w3) {};
        \node[world,        below=of w2,       label={right:$w_4{:}q$}] (w4) {};
    \end{scope}

    \path
        (wd) edge[-latex]  (w1)
        (wd) edge[-latex]  (w2)
        (w1) edge[latex-latex] (w3)
        (w2) edge[-latex]  (w4)
        (w3) edge[-latex]  (w2)
        (w2) edge[-latex, loop, out=110, in=30, looseness=14] (w2)
    ;
 \end{tikzpicture}
            \label{fig:ex-model-1}
        }
        \hfill
        \subfloat{
            \begin{tikzpicture}[y = 9.5em,>=latex,node distance=1.7em and 1.6em,style={font=\sffamily\footnotesize}, baseline=(wd),label distance=0.25em]
    \begin{scope}[inner sep=0.5pt]
        \node[poinetdworld,                    label={      right:$\class{w_d}{3}{:}p$}] (wd) {};
        \node[world,        below left =of wd, label={left :$\class{w_1}{2}{:}p$}] (w1) {};
        \node[world,        below right=of wd, label={[xshift=1mm]right:$\class{w_2}{2}{:}p$}] (w2) {};
        \node[world,        below=of w1,       label={left :$\class{w_3}{1}{:}q$}] (w3) {};
        \node[world,        below=of w2,       label={right:$\class{w_4}{1}{:}q$}] (w4) {};
    \end{scope}

    \path
        (wd) edge[->]  (w1)
        (wd) edge[->]  (w2)
        (w1) edge[<->] (w3)
        (w2) edge[->]  (w4)
        (w3) edge[->]  (w2)
        (w2) edge[->, loop, out=110, in=30, looseness=14] (w2)
    ;
\end{tikzpicture}
            \label{fig:ex-3-contr-1}
        }
        \hfill
        \subfloat{
            \begin{tikzpicture}[>=latex,node distance=1.7em and 1.5em,style={font=\sffamily\footnotesize}, baseline=(wd),label distance=0.25em]
    \begin{scope}[inner sep=0.5pt]
        \node[poinetdworld,                    label={      right:$\class{w_d}{2}{:}p$}] (wd) {};
        \node[world,        below left =of wd, label={left :$\class{w_1}{1}{:}p$}] (w1) {};
        \node[world,        below right=of wd, label={[xshift=1mm]right:$\class{w_2}{1}{:}p$}] (w2) {};
        \node[world,        below right=of w1, label={right      :$\class{w_3}{0}{:}q$}] (w3) {};
    \end{scope}

    \path
        (wd) edge[->] (w1)
        (wd) edge[->] (w2)
        (w1) edge[->] (w3)
        (w2) edge[->] (w3)
        (w2) edge[->, loop, out=110, in=30, looseness=14] (w2)
    ;
\end{tikzpicture}
            \label{fig:ex-2-contr-1}
        }
        \caption{Pointed model $\mathcal{M}$ (left) of Figure~\ref{fig:bound}, $\rootedContr{\mathcal{M}}{3}{}$ (center) and $\rootedContr{\mathcal{M}}{2}{}$ (right).}
        \label{fig:ex-contr-1}
    \end{figure}
    \begin{example}\label{ex:k-contr-1}
        Let $\mathcal{M} = (M, w_d)$ with $M = (W, R, V)$ be the pointed model of Figure \ref{fig:bound}, shown again in Figure~\ref{fig:ex-contr-1} (left). % along with 
        %  $\rootedContr{\mathcal{M}}{3}{}$ (center) and $\rootedContr{\mathcal{M}}{2}{}$ (right).
         % are shown in Figure \ref{fig:ex-contr-1}. We now analyze them separately.
        Let $\rootedContr{\mathcal{M}}{3}{} = (M', w'_d)$, with $M' = (W', R', V')$, shown in Figure~\ref{fig:ex-contr-1} (center).
        In Example \ref{ex:max-reprs}, we showed that, when $k=3$, we have $\maxRepr{W} = W$.
        From Definition \ref{def:rooted-k-contr-1}, we then have $W' = \{\class{w_d}{3}, \class{w_1}{2}, \class{w_2}{2}, \class{w_3}{1}, \class{w_4}{1}\}$ and $w'_d = \class{w_d}{3}$, where $\class{w_d}{3} = \{w_d\}$, $\class{w_1}{2} = \{w_1\}$, $\class{w_2}{2} = \{w_2\}$, $\class{w_3}{1} = \{w_3\}$ and $\class{w_4}{1} = \{w_4\}$.  
        Since $\maxRepr{W} = W$ and $\reprClass{w} = w$ for all $w \in W$, it is easy to check that we also get $R' = \{ \reprClass{x}, \reprClass{y} \mid x R y \}$. 
        Let now $\rootedContr{\mathcal{M}}{2}{} = (M', w'_d)$, with $M' = (W', R', V')$, shown in Figure~\ref{fig:ex-contr-1} (right).
        Again, from Example \ref{ex:max-reprs}, we have $\maxRepr{W} = W$.
        From Definition \ref{def:rooted-k-contr-1}, we then have $W' = \{\class{w_d}{2}, \class{w_1}{1}, \class{w_2}{1}, \class{w_3}{0}\}$ and $w'_d = \class{w_d}{2}$, where $\class{w_d}{2} = \{w_d\}$, $\class{w_1}{1} = \{w_1\}$, $\class{w_2}{1} = \{w_2\}$ and $\class{w_3}{0} = \{w_3, w_4\}$. In this case we get $R' = \{(\reprClass{x}, \reprClass{y}) \mid x R y$ and $\bound{x} > 0\}$. 
%        We now determine the edges of $\rootedContr{\mathcal{M}}{2}{}$.
 %       By following the same reasoning as in the previous case, and since now $\bound{w_3} = \bound{w_4} = 0$ (\ie by Definition \ref{def:rooted-k-contr-1}, $\class{w_3}{0}$ has no outgoing edges), we have $R' = \{(\reprClass{x}, \reprClass{y}) \mid x R y$ and $\bound{x} > 0\}$.
        % Since $\leastRepr{w_1}{\bound{w_p}{-}1} = w_1$ and $\leastRepr{w_2}{\bound{w_p}{-}1} = w_2$, we have $\class{w_p}{2} R' \class{w_1}{1}$ and $\class{w_p}{2} R' \class{w_2}{1}$.
        % Since $\leastRepr{w_3}{\bound{w_1}{-}1} = \leastRepr{w_4}{\bound{w_2}{-}1} = w_3$, we have $\class{w_1}{1} R' \class{w_3}{0}$ and $\class{w_2}{1} R' \class{w_3}{0}$.
        % Finally, since $\leastRepr{w_2}{\bound{w_2}{-}1} = w_p$, we have $\class{w_2}{1} R' \class{w_p}{2}$.
    \end{example}

    \section{Properties of Rooted $k$-Contractions}

    The first crucial property to show is that rooted $k$-contractions are $k$-bisimilar to their original models.

    \begin{theorem}\label{th:k-bisim-1}
        Let $\mathcal{M}$ be a pointed model and let $k \geq 0$. Then, $\mathcal{M} \bisim_k \rootedContr{\mathcal{M}}{k}{}$.
    \end{theorem}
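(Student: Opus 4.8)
The plan is to exhibit an explicit $k$-bisimulation between $\mathcal{M} = ((W,R,V), w_d)$ and its rooted $k$-contraction $\rootedContr{\mathcal{M}}{k}{} = ((W', R', V'), \reprClass{w_d})$. For each $0 \le h \le k$, I define
\[
    Z_h = \{(w, \reprClass{x}) \mid x \in \maxRepr{W},\ w \bisim_h x,\ \bound{w} \ge h \text{ and } \bound{x} \ge h\} \subseteq W \times W'.
\]
Using Proposition~\ref{prop:max-reprs}(\ref{itm:max-reprs-4}) one checks this is well-defined, \ie independent of which maximal representative $x$ names a given class. I would first dispatch the easy requirements of Definition~\ref{def:k-bisim}: the chain $Z_k \subseteq \cdots \subseteq Z_0$ is immediate since $w \bisim_h x$ implies $w \bisim_{h'} x$ for $h' \le h$ while the bound conditions only weaken; the pair $(w_d, \reprClass{w_d})$ lies in $Z_k$ because $\bound{w_d} = k$ and $w_d \in \maxRepr{W}$ by Proposition~\ref{prop:max-reprs}(\ref{itm:max-reprs-1}); and [atom] holds because $(w,\reprClass{x}) \in Z_0$ gives $w \bisim_0 x$, while $V'$ is defined so that $\reprClass{x} \in V'(p)$ iff $x \in V(p)$.

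The substance lies in [forth$_h$] and [back$_h$]. For [forth$_h$], given $(w,\reprClass{x}) \in Z_{h+1}$ and $w R_i v$, I use $w \bisim_{h+1} x$ to obtain $u$ with $x R_i u$ and $v \bisim_h u$; Lemma~\ref{lem:bound} gives $\bound{u} \ge \bound{x} - 1 \ge h \ge 0$, so $u$ has a maximal representative $y \succeq u$. The target edge $(\reprClass{x}, \reprClass{y}) \in R'_i$ is then witnessed by $z = u$ in Definition~\ref{def:rooted-k-contr-1} (note $\bound{x} \ge h+1 > 0$, and $y \bisim_{\bound{u}} u$ weakens to $y \bisim_{\bound{x}-1} u$), and $(v, \reprClass{y}) \in Z_h$ follows by transitivity of $\bisim_h$ together with Lemma~\ref{lem:bound}. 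For [back$_h$], given $(w,\reprClass{x}) \in Z_{h+1}$ and an edge $\reprClass{x} R'_i \reprClass{y}$, I invoke the representative-independence of $R'_i$ (footnote of Definition~\ref{def:rooted-k-contr-1}) to read the edge off through our specific $x$: there is $z$ with $x R_i z$ and $y \bisim_{\bound{x}-1} z$. Applying the back condition of the $(h+1)$-bisimulation witnessing $w \bisim_{h+1} x$ yields $v$ with $w R_i v$ and $v \bisim_h z$, and chaining $v \bisim_h z \bisim_h y$ (valid since $\bound{x} - 1 \ge h$) gives $v \bisim_h y$.

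The step I expect to be the main obstacle is establishing the bound condition $\bound{y} \ge h$ needed to place $(v, \reprClass{y})$ in $Z_h$ in the back case. The bound on $v$ comes directly from Lemma~\ref{lem:bound}, but $\bound{y}$ must be controlled via the maximality of $y$: from $x R_i z$ and $\bound{x} \ge h+1$, Lemma~\ref{lem:bound} gives $\bound{z} \ge \bound{x} - 1 \ge h$, and were $\bound{y} < \bound{x}-1$ the relation $y \bisim_{\bound{x}-1} z$ would weaken to $z \bisim_{\bound{y}} y$ with $\bound{z} > \bound{y}$, making $z \succ y$ and contradicting $y \in \maxRepr{W}$; hence $\bound{y} \ge \bound{x} - 1 \ge h$. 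The analogous but easier bound $\bound{y} \ge \bound{u} \ge h$ holds automatically in the forth case. Beyond this, the only real care required is the correct use of the representative-independence footnote, which guarantees that the abstractly defined edges of $R'_i$ can always be traced through whichever maximal representatives the bisimulation hands us.
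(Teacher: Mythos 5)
Your proof is correct and follows essentially the same route as the paper's: the same relations $Z_h$ built from maximal representatives (your extra condition $\bound{x}\geq h$ on the representative is exactly what the paper derives as its claim ($\dagger$), so the two definitions coincide), the same witness $z=u$ for the contracted edge in [forth$_h$], and the same maximality argument to control $\bound{y}$ in [back$_h$]. The only cosmetic difference is that you invoke the representative-independence footnote directly in [back$_h$] where the paper re-derives the needed facts via Proposition~\ref{prop:max-reprs}(\ref{itm:max-reprs-4}); both are fine.
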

        \begin{proof}
            Let $\mathcal{M} = ((W, R, V), w_d)$ and $\rootedContr{\mathcal{M}}{k}{} = ((W', R', V'), w'_d)$. For all $0 \leq h \leq k$, let $Z_h \subseteq W \times W'$ be the following binary relation:
            \begin{equation*}
                Z_h = \{(x, \reprClass{x'}) \mid x' \in \maxRepr{W}, x' \bisim_h x \textnormal{ and } \bound{x} \geq h\}.
            \end{equation*}
            We now show that the sequence $Z_k, \dots, Z_0$ is a $k$-bisimulation between $\mathcal{M}$ and $\rootedContr{\mathcal{M}}{k}{}$.
            Clearly, $Z_k \subseteq \dots \subseteq Z_0$.
            Moreover, since $w_d \in \maxRepr{W}$ (by Proposition \ref{prop:max-reprs}(\ref{itm:max-reprs-1})) and $\bound{w_d} = k$, it follows that $(w_d, \reprClass{w_d}) \in Z_k$.

            We first show [atom].
            Let $(x,x'') \in Z_0$.
            Then, by definition of $Z_0$, we have that $x'' = \reprClass{x'}$ for some $x' \in \maxRepr{W}$ such that $x' \bisim_0 x$ and $\bound{x} \geq 0$.
            From $x'' = \reprClass{x'}$, we get $x'' \bisim_{b(x')} x'$ and hence $x'' \bisim_0 x'$.
            Thus, $x'' \bisim_0 x' \bisim_0 x$ and, by [atom] of Definition \ref{def:k-bisim}, we then get $x'' \in V'(p)$ iff $x \in V(p)$, as required.

            Before moving to [forth$_h$] and [back$_h$], we show the following claim:
            \emph{Let $h \leq k$, $x \in W$ and $x' \in \maxRepr{W}$ be such that $x \bisim_h x'$ and $\bound{x} \geq h$. Then $\bound{x'} \geq h$} ($\dagger$).
            Proof: Since $x' \in \maxRepr{W}$, we have $x \not\succ x'$, hence either $\bound{x} \not> \bound{x'}$ or $x \not\bisim_{\bound{x'}} x'$.
            If $\bound{x} \not> \bound{x'}$, we get $\bound{x'} \geq \bound{x} \geq h$, as required.
            If $x \not\bisim_{\bound{x'}} x'$, then since $x \bisim_h x'$, we must have $\bound{x'} > h$.
            This proves ($\dagger$).

            We now show [forth$_h$].
            Let $h < k$, $(x, x'') \in Z_{h+1}$, and $x R_i y$.
            Then $x'' = \reprClass{x'}$, where $x' \in \maxRepr{W}$, $x' \bisim_{h+1} x$ and $\bound{x} \geq h+1$.
            From ($\dagger$) we get $b(x') \geq h+1$.
            We need to find a world $y'' \in W'$ such that $x'' R'_i y''$ and $(y, y'') \in Z_h$.
            Since $x R_i y$ and $x \bisim_{h+1} x'$, there exists $z \in W$ such that $x' R_i z$ and $y \bisim_h z$.
            Let $y'$ be a maximal representative of $z$.
            Then, $y' \in \maxRepr{W}$, $\bound{y'} \geq \bound{z}$ and $y' \bisim_{\bound{z}} z$.
            Since $x' R_i z$, by Lemma \ref{lem:bound} we have $\bound{z} \geq \bound{x'} - 1$.
            Since $y' \bisim_{\bound{z}} z$, then we also get $y' \bisim_{\bound{x'}-1} z$.
            We now have $x', y' \in \maxRepr{W}$, $x' R_i z$, $y' \bisim_{\bound{x'}-1} z$ and $\bound{x'} \geq h+1 > 0$, which by Definition \ref{def:rooted-k-contr-1} means $\reprClass{x'} R'_i \reprClass{y'}$.
            Letting $y'' = \reprClass{y'}$, we have hence found a $y''$ such that $x'' R'_i y''$.
            The only thing left to prove now is that $(y, y'') \in Z_h$. By definition of $Z_h$, it suffices to prove $y' \in \maxRepr{W}$, $y' \bisim_h y$ and $\bound{y} \geq h$.
            We already have $y' \in \maxRepr{W}$.
            Since $y' \bisim_{\bound{x'}-1} z$ and $\bound{x'} \geq h+1$, we get $y' \bisim_h z$.
            As we also have $y \bisim_h z$, we get $y' \bisim_h y$, as required.
            Finally, since $x R_i y$ and $\bound{x} \geq h+1$, Lemma \ref{lem:bound} gives $\bound{y} \geq h$.
            This concludes [forth$_h$].

            We now show [back$_h$].
            Let $h < k$, $(x, x'') \in Z_{h+1}$, and $x'' R'_i y''$.
            We need to find a world $y \in W$ such that $x R_i y$ and $(y, y'') \in Z_h$.
            Since $x'' R'_i y''$, by Definition \ref{def:rooted-k-contr-1} there exist $x', y' \in \maxRepr{W}$ and $z \in W$ such that $x'' = \reprClass{x'}$, $y'' = \reprClass{y'}$, $x' R_i z$, $y' \bisim_{\bound{x'}-1} z$ and $\bound{x'} > 0$.
            Since $(x, x'') \in Z_{h+1}$, then $x'' = \reprClass{\hat{x}}$, where $\hat{x} \in \maxRepr{W}$, $\hat{x} \bisim_{h+1} x$ and $\bound{x} \geq h+1$.
            By ($\dagger$), we get $\bound{\hat{x}} \geq h+1$.
            Since $x', \hat{x} \in \maxRepr{W}$ and $\reprClass{x'} = \reprClass{\hat{x}}$, by Proposition~\ref{lem:same-bound-repr}(\ref{itm:max-reprs-4}) we get $\bound{x'} = \bound{\hat{x}}$.
            From $x' \bisim_{\bound{x'}} \hat{x}$ and $\bound{x'} \geq h+1$ we get $x' \bisim_{h+1} \hat{x}$ and, hence, $x' \bisim_{h+1} x$.
            Since $x' \bisim_{h+1} x$ and $x' R_i z$, there exists $y \in W$ such that $x R_i y$ and $y \bisim_h z$.
            Only left to show is that $(y, y'') \in Z_h$.
            From $y' \bisim_{\bound{x'}-1} z$ and $\bound{x'} \geq h+1$, we get $y' \bisim_h z$ and, thus, $y' \bisim_h y$.
            Since $x R_i y$ and $\bound{x} \geq h+1$, Lemma \ref{lem:bound} gives $\bound{y} \geq h$.
            We now have $y' \in \maxRepr{W}$, $y' \bisim_h y$ and $\bound{y} \geq h$.
            Thus, $(y, y'') \in Z_h$, as required.
            This concludes [back$_h$].
    \end{proof}
   
    We now prove world minimality.
    To show this property, it is useful to group the worlds of a rooted $k$-contraction wrt.\ to their bound and analyze them separately. Specifically, we prove that each such group of worlds is minimal.
    % \tobo{Remember to put a backslash after wrt.\ and etc.\ and similar where the dot is *not* a punctuation mark, since otherwise latex will treat it as one and give incorrect spacing.}
    %To this end, we first show some intermediate results.
    To this end, we first show an intermediate result, namely that a maximal representative $x$ of $\mathcal{M}$ and its representative class $\reprClass{x}$ have the same bound (wrt.\ $\mathcal{M}$ and $\rootedContr{\mathcal{M}}{k}{}$, respectively).
    This result highlights the link between the notions of maximal representatives and representative classes, since a representative class $\reprClass{x}$ maintains the same bound as the maximal representative $x$.
    % (\ie both $x$ and $\reprClass{x}$ must distinguish formulas of modal depth $\bound{x}$ and lower).

    \begin{lemma}\label{lem:max-repr-bound}
        Let $\mathcal{M} = ((W, R, V), w_d)$ be a pointed model with rooted $k$-contraction $\rootedContr{\mathcal{M}}{k}{} = ((W', R', V'), w'_d)$ and let $x \in \maxRepr{W}$. Then $\bound{x} = \bound{\reprClass{x}}$.
        % \tobo{OK, so here you might say that we need to have the index (superscript) on $b$. But I still think not. It's still clear from the context which model we're in. We can never confuse a world $x$ with its representative class $\reprClass{x}$, so $\bound{x} = \bound{\reprClass{x}}$ is still unambiguous and clear (and much easier to read, the other formula looks really ugly to me, sorry to say).}
    \end{lemma}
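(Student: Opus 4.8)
The plan is to reduce the claim to a statement about depths and then prove it by a path-lifting argument. Since $\bound{w} = k - \depth{w}$ in $\mathcal{M}$, and likewise $\bound{\reprClass{x}} = k - d'(\reprClass{x})$ where $d'$ denotes depth in $\rootedContr{\mathcal{M}}{k}{}$ (measured from its designated world $\reprClass{w_d}$), proving $\bound{x} = \bound{\reprClass{x}}$ is the same as proving $\depth{x} = d'(\reprClass{x})$. I would show this by two inequalities, in both cases reusing the explicit $k$-bisimulation $Z_k, \dots, Z_0$ built in the proof of Theorem~\ref{th:k-bisim-1}, namely $Z_h = \{(a, \reprClass{a'}) \mid a' \in \maxRepr{W}, a' \bisim_h a, \bound{a} \geq h\}$.

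For $d'(\reprClass{x}) \leq \depth{x}$, I would take a shortest path $w_d = u_0 R_{i_1} u_1 \cdots R_{i_n} u_n = x$ in $\mathcal{M}$, so $n = \depth{x}$. The case $n=0$ is immediate, so assume $n \geq 1$. Starting from $(w_d, \reprClass{w_d}) \in Z_k$ and applying [forth$_h$] with $h = k-1, k-2, \dots, k-n$ (legal since $x \in \maxRepr{W}$ forces $\bound{x} \geq 0$, hence $n \leq k$), I obtain a path $\reprClass{w_d} R'_{i_1} C_1 \cdots R'_{i_n} C_n$ in the contraction with $(x, C_n) \in Z_{k-n} = Z_{\bound{x}}$. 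Writing $C_n = \reprClass{y'}$ with $y' \in \maxRepr{W}$, membership gives $y' \bisim_{\bound{x}} x$; then $x, y' \in \maxRepr{W}$ and Proposition~\ref{prop:max-reprs}(\ref{itm:max-reprs-4}) yield $\bound{y'} = \bound{x}$, whence $C_n = \reprClass{y'} = \reprClass{x}$. This exhibits a path of length $n$ from $\reprClass{w_d}$ to $\reprClass{x}$, so $d'(\reprClass{x}) \leq n = \depth{x}$.

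For the reverse inequality, I would symmetrically take a shortest path $\reprClass{w_d} = C_0 R'_{i_1} C_1 \cdots R'_{i_m} C_m = \reprClass{x}$ in the contraction, so $m = d'(\reprClass{x})$; the previous paragraph guarantees $m \leq \depth{x} \leq k$, so [back$_h$] can be applied $m$ times from $(w_d, \reprClass{w_d}) \in Z_k$. This produces a path $w_d R_{i_1} v_1 \cdots R_{i_m} v_m$ in $\mathcal{M}$ with $(v_m, \reprClass{x}) \in Z_{k-m}$. Unfolding this membership, there is $x' \in \maxRepr{W}$ with $\reprClass{x'} = \reprClass{x}$, $x' \bisim_{k-m} v_m$, and $\bound{v_m} \geq k-m$; and since $\reprClass{x'} = \reprClass{x}$ with both worlds maximal, Proposition~\ref{prop:max-reprs}(\ref{itm:max-reprs-4}) gives $\bound{x'} = \bound{x}$.

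The last step, and the only place I expect real subtlety, is turning this into $\bound{x} \geq k-m$ (equivalently $\depth{x} \leq m$). I would argue by contradiction: if $\bound{x} < k-m$, then $\bound{x'} = \bound{x} < k-m \leq \bound{v_m}$, and from $x' \bisim_{k-m} v_m$ together with $\bound{x'} \leq k-m$ we get $v_m \bisim_{\bound{x'}} x'$; these say $v_m \succ x'$, contradicting $x' \in \maxRepr{W}$. Hence $\bound{x} \geq k-m$, i.e.\ $\depth{x} \leq m = d'(\reprClass{x})$. Combining the two inequalities gives $\depth{x} = d'(\reprClass{x})$ and therefore $\bound{x} = \bound{\reprClass{x}}$. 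The main obstacle is exactly this maximality argument: the forth and back directions are routine path lifting through the bisimulation of Theorem~\ref{th:k-bisim-1}, but one must crucially use that $x$ is a maximal representative to prevent the contraction from placing $\reprClass{x}$ strictly closer to the root than $x$ sits in $\mathcal{M}$.
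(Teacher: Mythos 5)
Your proof is correct, but it takes a genuinely different route from the paper's. The paper proceeds by downward induction on the bound $h = \bound{x}$: for a maximal representative $y$ of bound $h-1$ it picks a predecessor $x$ with $x R_i y$ and $\bound{x}=h$ on a shortest path, proves that this $x$ is itself in $\maxRepr{W}$ (otherwise a strict representative of $x$ would induce one of $y$), invokes Lemma~\ref{lemm:relation-extension} to get $\reprClass{x} R'_i \reprClass{y}$, and then sandwiches $\bound{\reprClass{y}}$ between $h-1$ and $h-1$ using Lemma~\ref{lem:bound} and the induction hypothesis. You instead rephrase the claim as $\depth{x} = d'(\reprClass{x})$ and establish the two inequalities by lifting shortest paths through the explicit $k$-bisimulation $Z_k,\dots,Z_0$ of Theorem~\ref{th:k-bisim-1} (which is proved before this lemma and does not depend on it, so there is no circularity), using Proposition~\ref{prop:max-reprs}(\ref{itm:max-reprs-4}) to pin down the endpoint of the lifted path in the forth direction and the maximality of $x$ to rule out $\reprClass{x}$ sitting strictly closer to the root in the back direction. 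Your argument is conceptually cleaner in that it reuses already-established machinery and avoids the separate (and slightly delicate) claim that the shortest-path predecessor of a maximal representative is again a maximal representative; the paper's induction, on the other hand, is more self-contained at the level of the structural definitions of $\maxRepr{W}$ and $R'$, and implicitly yields the reusable fact that shortest paths to maximal representatives can be routed through maximal representatives. Both proofs hinge on the same two ingredients, Proposition~\ref{prop:max-reprs}(\ref{itm:max-reprs-4}) and the impossibility of a strict representative of a maximal one, so the difference is in the scaffolding rather than the core idea.
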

        \begin{proof}
           % Let $\rootedContr{\mathcal{M}}{k}{} = ((W', R', V'), w'_d)$.
            The proof is by induction on $h = \bound{x}$.
            % Since the relation $\succeq$ is defined on worlds with non-negative bound, we know that $0 \leq h \leq k$.
            For the base case, we consider $h = k$ (we do induction from $h=k$ down to $h=0$).
            % \tobo{Howcome $h=k$ is the base case? When doing induction on decreasing numbers, it's usually a good idea to let the reader know, to avoid confusion.}
            By definition, only the designated world of a model has bound $k$, so we immediately get $\bound{w_d} = \bound{\reprClass{w_d}} = k$, concluding the base case.
        Assume now by induction hypothesis (I.H.) that for all $x \in \maxRepr{W}$ with $\bound{x} = h > 0$ we have $\bound{x} = \bound{\reprClass{x}}$.
        Let $y \in \maxRepr{W}$ with $\bound{y} = h-1$.
        % \tobo{Why do you say that $y$ is a maximal representative of $s$? You just need to say $y \in \maxRepr{W}$, right? It's not a problem to mention it to make it explicit, I just got a bit uncertain at this point.} 
        We need to show $\bound{\reprClass{y}} = h-1$.
        % First, we prove that $b^{\rootedContr{s}{k}{}}(\reprClass{y}) \leq h-1$.
        Since $b(y) = h-1$, there must exist an $x$ with $x R_i y$ and $b(x) = h$. 
%        Let $x \in W$ be such that $x R_i y$ and $\bound{x} \geq h$.
        We now prove $x \in \maxRepr{W}$ by contradiction: Assuming $x \notin \maxRepr{W}$,
        %Then, 
        there exists $x' \in \maxRepr{W}$ such that $x' \succ x$, \ie $\bound{x'} > \bound{x}$ and $x' \bisim_{\bound{x}} x$. 
        Since $x R_i y$, there exists $y'$ such that $x' R_i y'$ and $y \bisim_{\bound{x}-1} y'$.
        As $\bound{x} = h$ and $\bound{y} = h-1$ we get $y \bisim_{\bound{y}} y'$.
        Since $x' R_i y'$ and $\bound{x'} > \bound{x} = h$, Lemma \ref{lem:bound} gives $\bound{y'} > h-1$ and thus $\bound{y'} > \bound{y}$.
        We now have $\bound{y'} > \bound{y}$ and $y' \bisim_{\bound{y}} y$, which means $y' \succ y$, contradicting $y \in \maxRepr{W}$.
        Thus, $x \in \maxRepr{W}$.
            %
            %by Proposition \ref{prop:max-reprs}(\ref{itm:max-reprs-1}), which states that $w_d \in \maxRepr{W}$, and by Definitions \ref{def:depth} and \ref{def:bound}, from which we have that $\depth{w_d} = \depth{\reprClass{w_d}} = 0$, and, thus, that $\bound{w_d} = \bound{\reprClass{w_d}} = k$.
            % \tobo{It seems you're missing to say that $x = w_d$? In principle it requires a small argument that the only $x$ with $\bound{x} = k$ is $w_d$. Or at least that it is mentioned.}
%
            %Let $y' = \leastRepr{y}{\bound{x}{-}1}$.
            Since $x,y  \in \maxRepr{W}$, $x R_i y$ and $\bound{x} = h > 0$, Lemma~\ref{lemm:relation-extension} gives $\reprClass{x} R'_i \reprClass{y}$.
            % Let $x \in \maxRepr{W}$ be a world in any shortest path from $w_d$ to $y$ such that $\bound{x} = h$.
            % \tobo{How do we know that such an $x$ exists?}
            % By Definition \ref{def:depth}, we have that $x R_i y$
            % \tobo{To me it sounds a bit misleading that it is ``by Definition 3.12''. In any case, why don't we instead start out by considering a shortest path from $w_d$ to $y$, and then let $x$ be the predecessor of $y$ on this path?} and, by Definition \ref{def:rooted-k-contr}, that there exists $\reprClass{y'} \in W'$, where $y' = \leastRepr{y}{\bound{x}{-}1}$, such that $\reprClass{x} R'_i \reprClass{y'}$.
            %By Definition \ref{def:least-repr}, we know that $y \bisim_{\bound{x}-1} y'$, \ie $y \bisim_{\bound{y}} y'$, and, since $y, y' \in \maxRepr{W}$, by Proposition~\ref{lem:same-bound-repr}(\ref{itm:max-reprs-4}) we get that $\bound{y} = \bound{y'}$.
            %From $y, y' \in \maxRepr{W}$, $y \bisim_{\bound{y}} y'$ and $\bound{y} = \bound{y'}$, we get that $\reprClass{y} = \reprClass{y'}$ and, thus, that $\reprClass{x} R'_i \reprClass{y}$.
            Since $\bound{\reprClass{x}} = h$ (by I.H.), Lemma~\ref{lem:bound} then gives $\bound{\reprClass{y}} \geq h-1$. We also have $\bound{\reprClass{y}} \leq h-1$, since if $\bound{\reprClass{y}} \geq h$, then I.H.\ would give $\bound{y} \geq h$, contradicting $\bound{y} = h-1$. Thus $\bound{\reprClass{y}} = h-1$, as required.
%            Finally, we show that it can not be that $\bound{\reprClass{y}} > h-1$.
 %           Suppose $\bound{\reprClass{y}} \geq h$. Then, by I.H., it would follow that $\bound{y} \geq h$. But this contradicts the hypothesis that $\bound{y} = h-1$.
  %          Thus, $\bound{\reprClass{y}} = h-1$ as required.
            % \tobo{Nice proof!}
%
        % Second, we prove that $b^{\rootedContr{s}{k}{}}(\reprClass{y}) \geq h-1$.
        % Assume by contradiction that $b^{\rootedContr{s}{k}{}}(\reprClass{y}) < h-1$. Then, since $h = \bound{x}$, there exists a minimal path in $\rootedContr{s}{k}{}$ from $\reprClass{w_d}$ to $\reprClass{y}$ that does not contain $\reprClass{x}$.
        % Let $\reprClass{\hat{x}}$ be the last world in such a path before $\reprClass{y}$, \ie such that $\reprClass{\hat{x}} R'_i \reprClass{y}$ for some $i \in \modalitiesSet$.
        % By Lemma \ref{lem:bound}, we know that $\bound{\reprClass{y}} \geq \bound{\reprClass{\hat{x}}} - 1$ and, thus, that $\bound{\reprClass{\hat{x}}} < h$.
        % By I.H., we also know that $\bound{\hat{x}} < h$ and, by Definition \ref{def:rooted-k-contr}, that $\hat{x} R_i y$.
        % By construction of $\hat{x}$, we know that $d^s(y) = d^s(\hat{x}) + 1$, \ie $b^s(y) = b^s(\hat{x}) - 1$.
        % But this implies that $b^s(y) < h - 1$, which contradicts the fact that $b^s(y) = h-1$ (by assumption). 
        % Thus, we get that $b^{\rootedContr{s}{k}{}}(\reprClass{y}) \geq h-1$ as required.
        % This completes the proof.
    \end{proof}
   
%    Using the above result, we now show that in a $k$-contraction, no two distinct worlds having bound $h$ are $h$-bisimilar.

    \begin{corollary}\label{cor:non-h-bisim}
        %Let $\rootedContr{\mathcal{M}}{k}{} = ((W, R, V), w_d)$ and 
        If $x \neq y$ are worlds of $\rootedContr{\mathcal{M}}{k}{}$ with   
        %\in W$ be two distinct worlds such that 
        $\bound{x} = \bound{y} = h$ then $x \not\bisim_h y$.
    \end{corollary}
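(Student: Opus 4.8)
The plan is to argue by contradiction, reducing $h$-bisimilarity of two distinct worlds of the contraction to $h$-bisimilarity of their maximal representatives in the original model $\mathcal{M}$, which would force the two representative classes to coincide.

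First I would write the two worlds as $\reprClass{a}$ and $\reprClass{b}$ for $a,b \in \maxRepr{W}$. By Lemma~\ref{lem:max-repr-bound}, the bound of each representative class (computed in $\rootedContr{\mathcal{M}}{k}{}$) equals the bound of its maximal representative (computed in $\mathcal{M}$), so $\bound{a} = \bound{\reprClass{a}} = h$ and $\bound{b} = \bound{\reprClass{b}} = h$. Now assume toward a contradiction that $\reprClass{a} \bisim_h \reprClass{b}$ inside $\rootedContr{\mathcal{M}}{k}{}$.

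Next I would transport this back to $\mathcal{M}$ using the explicit $k$-bisimulation $Z_k, \dots, Z_0$ constructed in the proof of Theorem~\ref{th:k-bisim-1}. Since $a \in \maxRepr{W}$, $a \bisim_h a$, and $\bound{a} = h \geq h$, we have $(a, \reprClass{a}) \in Z_h$; because $Z_h \subseteq \cdots \subseteq Z_0$ is (the tail of) a $k$-bisimulation, and hence an $h$-bisimulation with $(a, \reprClass{a})$ as its top pair, this certifies $(M, a) \bisim_h (M', \reprClass{a})$ (writing $M' = (W',R',V')$ for the underlying model of the contraction). Symmetrically $(M, b) \bisim_h (M', \reprClass{b})$. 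Chaining these cross-model $h$-bisimilarities with the assumed $\reprClass{a} \bisim_h \reprClass{b}$, and using that $h$-bisimilarity is an equivalence relation closed under composition across models, I obtain $a \bisim_h b$ in $\mathcal{M}$.

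Finally, since $\bound{a} = \bound{b} = h$ and $a \bisim_{\bound{a}} b$, the representative classes $\reprClass{a} = \class{a}{h}$ and $\reprClass{b} = \class{b}{h}$ are equal, contradicting $\reprClass{a} \neq \reprClass{b}$. I expect the \emph{main obstacle} to be the middle step: one must be careful to justify that membership of $(a, \reprClass{a})$ in $Z_h$ genuinely certifies the cross-model $h$-bisimilarity $(M, a) \bisim_h (M', \reprClass{a})$, and that these cross-model relations compose with the in-contraction one to yield $a \bisim_h b$ in $\mathcal{M}$; the remaining bookkeeping with bounds is immediate from Lemma~\ref{lem:max-repr-bound}.
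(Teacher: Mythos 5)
Your proof is correct and follows essentially the same route as the paper's: write the two worlds as representative classes of maximal representatives, use Lemma~\ref{lem:max-repr-bound} to equate bounds, invoke the explicit relations $Z_h$ from the proof of Theorem~\ref{th:k-bisim-1} to transfer $h$-bisimilarity between each maximal representative and its class, and conclude via the fact that $h$-bisimilar worlds of bound $h$ share a representative class. The paper merely phrases it contrapositively (deriving $x' \not\bisim_h y'$ from $x \neq y$ first and then transferring the non-bisimilarity), whereas you run the contradiction in the other direction; the content is identical.
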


    \begin{proof}
        Let $\rootedContr{\mathcal{M}}{k}{} = (M, w_d)$ and $\mathcal{M} = (M', w'_d)$.
        By Definition \ref{def:rooted-k-contr-1}, we have $x = \reprClass{x'}$ and $y = \reprClass{y'}$ for some worlds $x',y' \in \maxRepr{W}$.
        By Lemma \ref{lem:max-repr-bound} we get $\bound{x} = \bound{x'}$ and $\bound{y} = \bound{y'}$, and hence $\bound{x'} = \bound{y'} = h$.
        Since $x \neq y$ and $\bound{x'} = \bound{y'} = h$, we get $x' \not\bisim_h y'$.
            %        , as required. 
%        Let $\mathcal{M} = ((W, R, V), w_d)$.
 %       By Definition \ref{def:rooted-k-contr-1}, there exist $x, y \in \maxRepr{W}$ such that $x' = \reprClass{x}$ and $y' = \reprClass{y}$.
  %      By Lemma \ref{lem:max-repr-bound}, we get $\bound{x} = \bound{y} = h$.
   %     Since $\reprClass{x} \neq \reprClass{y}$, we have $x \not\bisim_h y$.
        From the proof of Theorem \ref{th:k-bisim-1} we get $(M', x') \bisim_h (M, x)$ and $(M', y') \bisim_h (M, y)$, and hence $x \not\bisim_h y$.
        % as required.
        % \tobo{Why do you introduce new names $x'$ and $y'$ for $\reprClass{x}$ and $\reprClass{y}$? I'm not sure it's necessary. It's always better to use fewer symbols whenever possible. I would start out by letting $\reprClass{x},\reprClass{y} \in W'$ and work directly with that.}
        % Assume by contradiction that $\reprClass{x} \bisim_h \reprClass{y}$.
        % , which implies that $x \bisim_h y$.
        % \tobo{Here you are also picking out elements from equivalence classes, and have to be careful. You might be able to pick out a pair of representatives that gives you what you want, but you have to be a bit careful.}
        % By Lemma \ref{lem:max-repr-bound}, we have that $\bound{x} = \bound{y} = h$ and, since $x, y \in \maxRepr{W}$ and $x \bisim_h y$, that $x = y$. \tobo{I'm not sure how you got to $x \bisim_h y$, but I'm also not sure you need it.} But this implies that $\reprClass{x} = \reprClass{y}$, which contradicts the initial assumption that $\reprClass{x} \neq \reprClass{y}$. \tobo{I don't see any such initial assumption. Was this perhaps a proof by contradiction that you changed, abut then you forgot to revise this final sentence?}
    \end{proof}
   
%    Corollary \ref{cor:non-h-bisim} allows us to group the worlds of a $k$-contraction based on their bound and to tackle each group separately.
%    This simplifies the proof of the next theorem, which shows that rooted $k$-contractions are world-minimal.

    \begin{theorem}\label{th:world-minimal-k-contr}
        Let $\mathcal{M}$ be a pointed model and let $k \geq 0$. Then $\rootedContr{\mathcal{M}}{k}{}$ is a world minimal model $k$-bisimilar to $\mathcal{M}$, \ie it has the least number of worlds among all models $k$-bisimilar to $\mathcal{M}$.
    \end{theorem}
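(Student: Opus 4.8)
The plan is to show that every model $N$ with $N \bisim_k \mathcal{M}$ has at least as many worlds as $\rootedContr{\mathcal{M}}{k}{}$; since $\rootedContr{\mathcal{M}}{k}{} \bisim_k \mathcal{M}$ by Theorem~\ref{th:k-bisim-1}, this establishes both that $\rootedContr{\mathcal{M}}{k}{}$ is $k$-bisimilar to $\mathcal{M}$ and that it is world minimal. Writing $C = \rootedContr{\mathcal{M}}{k}{}$, I would first reduce to comparing $N$ directly with $C$: $k$-bisimilarity is symmetric and transitive (compose the witnessing relations levelwise), so from $N \bisim_k \mathcal{M}$ and $\mathcal{M} \bisim_k C$ I obtain a $k$-bisimulation $Z_k \subseteq \cdots \subseteq Z_0$ between $C$ and $N$. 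The whole argument then amounts to building an \emph{injection} $g$ from the worlds of $C$ into the worlds of $N$.

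For the construction of $g$, note that every world $c$ of $C$ has non-negative bound, so $\depth{c} = k - \bound{c} \leq k$ and $c$ is reachable from the designated world along a shortest path $\reprClass{w_d} = c_0, c_1, \dots, c_{\depth{c}} = c$. Starting from $(c_0, w_d^N) \in Z_k$ and applying [forth$_h$] along the edges of this path, I obtain worlds $w_0, \dots, w_{\depth{c}}$ of $N$ with $(c_j, w_j) \in Z_{k-j}$; setting $g(c) := w_{\depth{c}}$ gives $(c, g(c)) \in Z_{\bound{c}}$, and since the tail $Z_{\bound{c}} \subseteq \cdots \subseteq Z_0$ is a $\bound{c}$-bisimulation, this yields $(C, c) \bisim_{\bound{c}} (N, g(c))$.

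To prove $g$ injective, suppose $g(c_1) = g(c_2) = w$ for distinct $c_1, c_2$, and set $h = \min(\bound{c_1}, \bound{c_2})$. From $(C, c_i) \bisim_{\bound{c_i}} (N, w)$ I get $(C, c_1) \bisim_h (N, w) \bisim_h (C, c_2)$, hence $c_1 \bisim_h c_2$ in $C$. If $\bound{c_1} = \bound{c_2} = h$ this contradicts Corollary~\ref{cor:non-h-bisim}. The remaining case, $\bound{c_1} > \bound{c_2} = h$, I would rule out by a separate distinctness claim: \emph{in $C$ no world strictly represents another}. To establish it, write $c_i = \reprClass{x_i}$ with $x_i \in \maxRepr{W}$; by Lemma~\ref{lem:max-repr-bound} we have $\bound{x_i} = \bound{c_i}$, and from the proof of Theorem~\ref{th:k-bisim-1} we have $(\mathcal{M}, x_i) \bisim_{\bound{x_i}} (C, c_i)$. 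Transferring $c_1 \bisim_h c_2$ through these bisimilarities (restricting the first to level $h$, since $\bound{x_1} \geq h$) gives $x_1 \bisim_h x_2$ in $\mathcal{M}$ with $\bound{x_1} > \bound{x_2} = h$, i.e.\ $x_1 \succ x_2$, contradicting $x_2 \in \maxRepr{W}$. Hence $g$ is injective and $|W_N| \geq |W_C|$.

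The main obstacle is exactly this cross-bound case of the injectivity argument. Corollary~\ref{cor:non-h-bisim} only separates worlds of $C$ of \emph{equal} bound, so it does not by itself prevent a world of large bound from being $h$-bisimilar to one of smaller bound $h$; guaranteeing this requires the extra fact that the contraction contains no strict representations, which is not implied by minimality at each fixed bound in isolation and must be traced back, via the bisimilarities extracted from the proof of Theorem~\ref{th:k-bisim-1} and the bound-preservation of Lemma~\ref{lem:max-repr-bound}, to the maximality of the representatives chosen in $\mathcal{M}$. Everything else — transitivity of $k$-bisimilarity and the forth-based reachability construction of $g$ — I expect to be routine.
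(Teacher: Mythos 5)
Your proof is correct, and it reaches the same ultimate contradiction as the paper — a pair of worlds of $\rootedContr{\mathcal{M}}{k}{}$ that are $h$-bisimilar but of different bounds, pulled back via the bisimulation of Theorem~\ref{th:k-bisim-1} and Lemma~\ref{lem:max-repr-bound} to a strict representation $x_1 \succ x_2$ contradicting $x_2 \in \maxRepr{W}$ — but the organization is genuinely different. The paper argues level by level: it fixes a bound $h$, assumes $|W'_h| > |W''_h|$, uses Corollary~\ref{cor:non-h-bisim} to find some $x' \in W'_h$ with no $h$-bisimilar partner in $W''_h$, and then needs \emph{both} a forward path-chase into $\mathcal{M}''$ \emph{and} a backward path-chase returning to $\mathcal{M}'$ to manufacture the offending pair $x', y'$ with $\bound{y'} > \bound{x'}$. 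You instead build a single global injection $g$ from the worlds of the contraction into an arbitrary $k$-bisimilar model, using only the forward chase, and push all the work into injectivity, where the two cases (equal bounds via Corollary~\ref{cor:non-h-bisim}, unequal bounds via the ``no strict representation inside the contraction'' claim) cleanly separate the two ingredients that the paper's pigeonhole argument interleaves. Your route avoids the back-chase entirely and makes explicit a fact the paper leaves implicit in its counting, namely that worlds of the contraction of different bounds $\bound{c_1} > \bound{c_2} = h$ are never $h$-bisimilar; the paper's per-level counting, on the other hand, directly yields the slightly stronger statement $|W'_h| \leq |W''_h|$ for every $h$, which it reuses in the proof of Theorem~\ref{th:minimal-k-contr}. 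Both arguments rest on the same lemmas and are of comparable length; yours is arguably the more transparent of the two.
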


    \begin{proof}
        For any model with world set $W$, let $W_h$ denote the subset of worlds with bound $h$, \ie $W_h = \{w \in W \mid \bound{w} = h\}$.
        Let $\mathcal{M} = ((W, R, V), w_d)$,
         $\mathcal{M}' = \rootedContr{\mathcal{M}}{k}{} = ((W', R', V'), w'_d)$ and 
        %By Theorem \ref{th:k-bisim} we know that $\mathcal{M} \bisim_k \mathcal{M}'$. 
        let $\mathcal{M}'' = ((W'', R'', V''), w''_d)$ be a world-minimal pointed model with $\mathcal{M}' \bisim_k \mathcal{M}''$.
        We want to show that $|W'| \leq  |W''|$.
        % By contradiction, assume that $|W''| < |W'|$.
        %By Lemma \ref{lem:restriction}, we have that we can restrict ourselves to considering worlds with depth at most $k$, namely, with non-negative bound.
        % Since $\mathcal{M}''$ is world-minimal, we know that each $x'' \in W''$ is such that $\bound{x''} \geq 0$.
        % \tobo{This we know from where?}
        % Similarly, by Lemma \ref{lem:max-repr-bound} and Definition \ref{def:rooted-k-contr-1}, we have that for each $x' \in W'$ there exists $x \in \maxRepr{W}$ such that $x' = \reprClass{x}$ and $\bound{\reprClass{x}} = \bound{x} \geq 0$.
        % Thus, we can restrict ourselves to consider only worlds with non-negative bound.
        We show this by proving that $| W'_h | \leq | W''_h |$ for all $0 \leq h \leq k$ (noting that $W'_h = \varnothing$ for $h < 0$, since no world of negative bound is in $\maxRepr{W}$, by Proposition~\ref{prop:max-reprs}(\ref{itm:max-reprs-3})). To achieve a contradiction, assume that $| W'_h | > | W''_h |$ for some $h \leq k$. By Corollary \ref{cor:non-h-bisim}, we have $x \not\bisim_h y$ for all distinct $x,y \in W'_h$. Hence, any such distinct $x,y \in W'_h$ can not be $h$-bisimilar to the same world of $W''_h$. Since $| W'_h | > | W''_h |$, there must then exist at least one world $x' \in W'_h$ that is not $h$-bisimilar to any world in $W''_h$. 
        Since $\bound{x'} = h$, there exists a path from $w'_d$ to $x'$ of length $k{-}h$.
        Since $(M', w'_d) \bisim_k (M'', w_d'')$, by repeated applications of [forth], we get a path of length $k{-}h$ from $w''_d$ to a world $x''$ with $(M', x') \bisim_h (M'', x'')$. Since $x''$ is reachable by a path of length $k{-}h$ from $w''_d$, we have $\bound{x''} \geq k {-} (k{-}h) = h$.   
        Since $x'$ is not $h$-bisimilar to any world in $W''_h$, we furthermore get $\bound{x''} > h$.
        This implies that $x''$ is reachable by a path of length $< k{-}h$ from $w''_d$, and by repeated applications of [back], we get a path of length $< k{-}h$ from $w'_d$ to a world $y'$ such that $(M', y') \bisim_h (M'', x'')$.
        We now have $(M', x') \bisim_h (M'', x'') \bisim_h (M', y')$, and since the path has length $< k{-}h$, also $\bound{y'} > h$.  
        By Definition \ref{def:rooted-k-contr-1}, there exist $x,y \in \maxRepr{W}$ such that $x' = \reprClass{x}$ and $y' = \reprClass{y}$ and, by Lemma~\ref{lem:max-repr-bound},   
         we get $\bound{x} = \bound{x'} = h$ and $\bound{y} = \bound{y'} > h$.
        From the proof of Theorem \ref{th:k-bisim-1}, we get $(M, x) \bisim_h (M', x')$ and $(M, y) \bisim_{h} (M', y')$, and hence $(M, x) \bisim_h (M', x') \bisim_h (M', y') \bisim_h (M, y)$. Taken together, this implies $y \succ x$, contradicting $x \in \maxRepr{W}$, and we're done. 
    \end{proof}

    \section{Edge Minimal Contractions}\label{sec:edge-min-contr}
    We have defined rooted $k$-contractions and shown them to be world minimal. 
    However, Definition \ref{def:rooted-k-contr-1} does not guarantee that the resulting $k$-contraction is also edge minimal, as we now exemplify. 
    \begin{example}\label{ex:edge-minimal-intro}
    Let $\mathcal{M} = ((W,R,V),w_d)$ be the pointed model in Figure~\ref{fig:ex-contr-1} left and let $\rootedContr{\mathcal{M}}{3}{} = ((W',R',V'),w'_d)$ be its rooted $3$-contraction (Figure~\ref{fig:ex-contr-1} center). Recall from Example~\ref{ex:k-contr-1} that $\maxRepr{W} = W$ and $\bound{w_3} = 1$.  
    Since $w_3 R w_1$ and $w_3 R w_2$, Lemma~\ref{lemm:relation-extension} hence gives us $\reprClass{w_3}R'\reprClass{w_1}$ and $\reprClass{w_3}R' \reprClass{w_2}$. 
%    Definition \ref{def:rooted-k-contr-1}, we would have both $\reprClass{w_3}R'\reprClass{w_1}$ and $\reprClass{w_3}R' \reprClass{w_2}$, since $w_1, w_2, w_3 \in \maxRepr{W}$, $w_3 R w_1$, $w_3 R w_2$, $\bound{w_3} - 1 = 0$, and, clearly, $w_1 \bisim_0 w_1$ and $w_2 \bisim_0 w_2$.
    However, including only one of those edges in $R'$ is sufficient to guarantee $3$-bisimilarity to $\mathcal{M}$: $\bound{w_3} = 1$ and thus $\reprClass{w_3}$ only needs to preserve $1$-bisimilarity to $w_3$.  
%Thus the current definition of $\rootedContr{\mathcal{M}}{3}{}$ is not edge-minimal.
    \end{example}
    When not all edges are required, we need to decide which to preserve. 
    %To improve our definition, we need to 
    %univocally decide which edge should be included.
    To this end, we introduce the notion of \emph{least $h$-representative} of a world.
    \begin{definition}\label{def:least-repr}
        Let $<$ be a total order on $W$ and let $0 \leq h \leq k$. 
        % We use $\min_< X$ to denote the least element of $X$ with respect to $<$. For $0 \leq h \leq k$, the 
        The \emph{least $h$-representative} of $w\in W$ is the world 
        $\leastRepr{w}{h} = \min_< \{ v \in \maxRepr{W} \mid v \bisim_h w \}$. 
    \end{definition}

    The least $h$-representative of a world $w$ is the minimal (wrt.\ $<$) maximal representative $v$ such that $v \bisim_h w$. %In our example, deciding whether we should include $(\reprClass{w_3}, \reprClass{w_1})$ or $(\reprClass{w_3}, \reprClass{w_2})$ in $R'$ amounts to determining the least $0$-representative (since $\bound{w_3}{-}1 = 0$) of $w_1$ (or $w_2$), namely $\leastRepr{w_1}{0} = \leastRepr{w_2}{0} = \min_< \{w_1, w_2\}$.
    % In fact, $w_3 \in \maxRepr{W}$ needs to distinguish formulas up to modal depth $\bound{w_3} = 1$ and $w_1$ and $w_2$ are direct successors of $w_3$.
    We now present the revised definition of rooted $k$-contraction guaranteeing minimality among $k$-bisimilar models both in number of worlds and edges (and hence minimality in terms of overall size).  
    %Notice that $W'$ and $V'$ are defined as in Definition \ref{def:rooted-k-contr-1}.

    \begin{definition}\label{def:rooted-k-contr}
        Let  $\mathcal{M} = ((W, R, V), w_d)$, let $k \geq 0$ and let $<$ be a total order on $W$. The \emph{$k$-contraction} of $\mathcal{M}$ wrt.\ $<$ is the pointed model $\rootedContr{\mathcal{M}}{k}{<} = ((W', R', V'), \reprClass{w_d})$, where:
        \begin{compactitem}
            \item $W'    = \{\reprClass{x} \mid x \in \maxRepr{W}\}$;
            \item $R'_i  = \{(\reprClass{x}, \reprClass{\leastRepr{y}{\bound{x}{-}1}}) \mid x \in \maxRepr{W}, x R_i y \text{ and } \bound{x} > 0\}$;
            \item $V'(p) = \{\reprClass{x} \mid x \in \maxRepr{W} \text{ and } x \in V(p)\}$.
        \end{compactitem}
    \end{definition}
    Well-definedness of the definition (independence of choice of representatives) is guaranteed in the same way as for  Definition \ref{def:rooted-k-contr-1}.
    \begin{figure}
        \centering
        \subfloat{
            \begin{tikzpicture}[>=stealth',node distance=1.7em and 1.5em,style={font=\sffamily\footnotesize}, baseline=(wd),label distance=0.25em]
    \begin{scope}[inner sep=0.5pt]
        \node[poinetdworld,                    label={right:$w_d{:}p$}] (wd) {};
        \node[world,        below left =of wd, label={left :$w_1{:}p$}] (w1) {};
        \node[world,        below right=of wd, label={[xshift=1mm]right:$w_2{:}p$}] (w2) {};
        \node[world,        below=of w1,       label={left :$w_3{:}q$}] (w3) {};
        \node[world,        below=of w2,       label={right:$w_4{:}q$}] (w4) {};
    \end{scope}

    \path
        (wd) edge[-latex]  (w1)
        (wd) edge[-latex]  (w2)
        (w1) edge[latex-latex] (w3)
        (w2) edge[-latex]  (w4)
        (w3) edge[-latex]  (w2)
        (w2) edge[-latex, loop, out=110, in=30, looseness=14] (w2)
    ;
 \end{tikzpicture}
            \label{fig:ex-model}
        }
        \hfill
        \subfloat{
            \begin{tikzpicture}[y = 9.5em,>=latex,node distance=1.7em and 1.5em,style={font=\sffamily\footnotesize}, baseline=(wd),label distance=0.25em]
    \begin{scope}[inner sep=0.5pt]
        \node[poinetdworld,                    label={      right:$\class{w_d}{3}{:}p$}] (wd) {};
        \node[world,        below left =of wd, label={left :$\class{w_1}{2}{:}p$}] (w1) {};
        \node[world,        below right=of wd, label={[xshift=1mm]right:$\class{w_2}{2}{:}p$}] (w2) {};
        \node[world,        below=of w1,       label={left :$\class{w_3}{1}{:}q$}] (w3) {};
        \node[world,        below=of w2,       label={right:$\class{w_4}{1}{:}q$}] (w4) {};
    \end{scope}

    \path
        (wd) edge[->]  (w1)
        (wd) edge[->]  (w2)
        (w1) edge[<->] (w3)
        (w2) edge[->]  (w4)
        (w2) edge[->, loop, out=110, in=30, looseness=14] (w2)
    ;
\end{tikzpicture}
            \label{fig:ex-3-contr}
        }
        \hfill
        \subfloat{
            \begin{tikzpicture}[>=latex,node distance=1.7em and 1.5em,style={font=\sffamily\footnotesize}, baseline=(wd),label distance=0.25em]
    \begin{scope}[inner sep=0.5pt]
        \node[poinetdworld,                    label={      right:$\class{w_d}{2}{:}p$}] (wd) {};
        \node[world,        below left =of wd, label={left :$\class{w_1}{1}{:}p$}] (w1) {};
        \node[world,        below right=of wd, label={[xshift=1mm] right:$\class{w_2}{1}{:}p$}] (w2) {};
        \node[world,        below right=of w1, label={right      :$\class{w_3}{0}{:}q$}] (w3) {};
    \end{scope}

    \path
        (wd) edge[->]  (w1)
        (wd) edge[<->] (w2)
        (w1) edge[->]  (w3)
        (w2) edge[->]  (w3)
    ;
\end{tikzpicture}
            \label{fig:ex-2-contr}
        }
        % \hfill
        % \subfloat{
        %     \input{img/ex-1-contr.tex}
        %     \label{fig:ex-1-contr}
        % }
        \caption{Pointed model $\mathcal{M}$ (left) of Example \ref{ex:k-contr}, $\rootedContr{\mathcal{M}}{3}{<}$ (center) and $\rootedContr{\mathcal{M}}{2}{<}$ (right).}
        % \tobo{Could we maybe also include the model of Figure 3.12 here? It's much easier to understand if one sees them side by side. And then I would also use the example of the figure text to explain what goes on. How we end up redirecting edges and contracting worlds. Maybe we should also do the $1$-contraction, just to show that we (of course) also can get something of lower depth than the original model. But, OK, 4 models side by side might be a bit too much, I'm not sure they would fit.} \tobo{Another thing: I wouldn't use thick edges in papers. I use it in slides to make sure the edges are easy to see also on bad projectors, but in papers I think it's not visually so appealing. I like it better when edges have their standard thickness. I now also always use the latex tip, so my edges are of type \texttt{-latex}. It looks very good in latex papers, since then all edges are typeset in a similar way.} }
        \label{fig:ex-contr}
    \end{figure}
    \begin{example}\label{ex:k-contr}
        Let $\mathcal{M} = ((W,R,V), w_d)$ be the pointed model of Figure \ref{fig:bound} (also shown in Figure~\ref{fig:ex-contr} left). Let $<$ be a total order on $W$ such that $w_d < w_1 < w_2 < w_3 < w_4$. The pointed models $\rootedContr{\mathcal{M}}{3}{<}$ and $\rootedContr{\mathcal{M}}{2}{<}$, also shown in Figure \ref{fig:ex-contr}, will now be analyzed. First 
        let $\rootedContr{\mathcal{M}}{3}{<} = ((W',R',V'), w'_d)$.
        From Example \ref{ex:k-contr-1}, we have $\maxRepr{W} = W$, $W' = \{\class{w_d}{3}, \class{w_1}{2}, \class{w_2}{2}, \class{w_3}{1}, \class{w_4}{1}\}$ and $w'_d = \class{w_d}{3}$.
        Now for the edges of $\rootedContr{\mathcal{M}}{3}{<}$.
        Notice that for all $x R y$ of $\mathcal{M}$ such that $\bound{x} \geq 2$, we have that $\leastRepr{y}{\bound{x}{-}1} = y$.
        Thus, we have $\class{w_d}{3} R' \class{w_1}{2}$, $\class{w_d}{3} R' \class{w_2}{2}$, $\class{w_1}{2} R' \class{w_3}{1}$, $\class{w_2}{2} R' \class{w_2}{2}$ and $\class{w_2}{2} R' \class{w_4}{1}$.
        Finally, consider the edges $w_3 R w_1$ and $w_3 R w_2$
        as discussed in Example~\ref{ex:edge-minimal-intro}. Since $\leastRepr{w_1}{\bound{w_3}{-}1} = \leastRepr{w_2}{\bound{w_3}{-}1} = w_1$, we have $\class{w_3}{1} R' \class{w_1}{2}$ (and not $\class{w_3}{1} R' \class{w_2}{2}$).

        Now let $\rootedContr{\mathcal{M}}{2}{<} = ((W',V',R'), w'_d)$. From Example \ref{ex:k-contr-1}, we have $\maxRepr{W} = W$, $W' = \{\class{w_d}{2}, \class{w_1}{1}, \class{w_2}{1}, \class{w_3}{0}\}$ and $w'_d = \class{w_d}{2}$.
        %We now determine the edges of $\rootedContr{\mathcal{M}}{2}{<}$.
        From $\leastRepr{w_1}{\bound{w_d}{-}1} = w_1$ and $\leastRepr{w_2}{\bound{w_d}{-}1} = w_2$ we get $\class{w_d}{2} R' \class{w_1}{1}$ and $\class{w_d}{2} R' \class{w_2}{1}$.
        From $\leastRepr{w_3}{\bound{w_1}{-}1} = \leastRepr{w_4}{\bound{w_2}{-}1} = w_3$ we get $\class{w_1}{1} R' \class{w_3}{0}$ and $\class{w_2}{1} R' \class{w_3}{0}$.
        Finally, since $\leastRepr{w_2}{\bound{w_2}{-}1} = w_d$, we have $\class{w_2}{1} R' \class{w_d}{2}$.
    \end{example}

    \begin{theorem}\label{th:k-bisim}
        Let $\mathcal{M}$ be a pointed model and let $k \geq 0$. Then $\mathcal{M} \bisim_k \rootedContr{\mathcal{M}}{k}{<}$.
    \end{theorem}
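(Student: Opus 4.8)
The plan is to mirror the proof of Theorem~\ref{th:k-bisim-1} almost verbatim, since $\rootedContr{\mathcal{M}}{k}{<}$ differs from $\rootedContr{\mathcal{M}}{k}{}$ only in the (now sparser) edge relation $R'_i$. Writing $\mathcal{M} = ((W,R,V),w_d)$ and $\rootedContr{\mathcal{M}}{k}{<} = ((W',R',V'),w'_d)$, I would use exactly the same candidate $k$-bisimulation
\begin{equation*}
Z_h = \{(x,\reprClass{x'}) \mid x' \in \maxRepr{W},\ x' \bisim_h x \text{ and } \bound{x} \geq h\}.
\end{equation*}
The nesting $Z_k \subseteq \cdots \subseteq Z_0$, the membership $(w_d,\reprClass{w_d}) \in Z_k$, and the [atom] clause are verified exactly as before, since none of these involve $R'_i$; in particular $V'$ is unchanged between the two definitions. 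I would also reuse, unchanged, the auxiliary claim $(\dagger)$: if $x \bisim_h x'$, $x' \in \maxRepr{W}$ and $\bound{x} \geq h$, then $\bound{x'} \geq h$.

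The only clauses where the edge relation enters are [forth$_h$] and [back$_h$], and here the argument adapts cleanly. For [forth$_h$], given $(x,\reprClass{x'}) \in Z_{h+1}$ and $x R_i y$, I proceed as before to obtain $z$ with $x' R_i z$ and $y \bisim_h z$, and $(\dagger)$ gives $\bound{x'} \geq h+1 > 0$. The key change is the choice of witness: rather than picking an arbitrary maximal representative of $z$, I take $y' = \leastRepr{z}{\bound{x'}-1}$, which by Definition~\ref{def:least-repr} lies in $\maxRepr{W}$ and satisfies $y' \bisim_{\bound{x'}-1} z$. Since $x' \in \maxRepr{W}$, $x' R_i z$ and $\bound{x'} > 0$, Definition~\ref{def:rooted-k-contr} yields exactly the edge $\reprClass{x'} R'_i \reprClass{y'}$. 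From $\bound{x'}-1 \geq h$ I get $y' \bisim_h z$, hence $y' \bisim_h y$, and Lemma~\ref{lem:bound} gives $\bound{y} \geq h$; so $(y,\reprClass{y'}) \in Z_h$, as required.

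For [back$_h$], the point is that every edge of $\rootedContr{\mathcal{M}}{k}{<}$ already carries precisely the witness structure used in the world-minimal proof. Unfolding $x'' R'_i y''$ via Definition~\ref{def:rooted-k-contr} produces $x' \in \maxRepr{W}$ and $z \in W$ with $x'' = \reprClass{x'}$, $x' R_i z$, $\bound{x'} > 0$ and $y'' = \reprClass{\leastRepr{z}{\bound{x'}-1}}$; setting $y' = \leastRepr{z}{\bound{x'}-1}$ gives $y' \in \maxRepr{W}$ with $y' \bisim_{\bound{x'}-1} z$, which is exactly the data the back clause of Theorem~\ref{th:k-bisim-1} starts from. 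From there the argument is identical: using $\reprClass{x'} = x''$, $(\dagger)$, and Proposition~\ref{lem:same-bound-repr}(\ref{itm:max-reprs-4}) to match bounds, I recover $x' \bisim_{h+1} x$, pull back the edge to some $y$ with $x R_i y$ and $y \bisim_h z$, and conclude $(y,y'') \in Z_h$.

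The main obstacle, such as it is, lies in [forth$_h$], where one must confirm that the thinned-out edge set still supplies a forth-witness. This reduces to the observation that the \emph{single} surviving edge out of $\reprClass{x'}$ toward the class of $z$, namely the one landing in $\reprClass{\leastRepr{z}{\bound{x'}-1}}$, is $h$-bisimilar to $y$ because $\leastRepr{z}{\bound{x'}-1} \bisim_{\bound{x'}-1} z$ and $\bound{x'}-1 \geq h$. I would also note in passing that $\leastRepr{z}{\bound{x'}-1}$ is well-defined here, since any maximal representative of $z$ witnesses non-emptiness of the set in Definition~\ref{def:least-repr}.
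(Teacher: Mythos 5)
Your proposal is correct and follows essentially the same route as the paper: the same relations $Z_h$, the same use of $(\dagger)$, and the same choice of $y' = \leastRepr{z}{\bound{x'}-1}$ as the forth-witness. The only cosmetic difference is in [back$_h$], where the paper explicitly proves $R'_i \subseteq R''_i$ (with $R''$ the relation of Definition~\ref{def:rooted-k-contr-1}) and then reuses the back argument of Theorem~\ref{th:k-bisim-1}, while you inline that same observation by unfolding the edge directly.
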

    \begin{proof}
        Let $\mathcal{M} = ((W, R, V), w_d)$ and $\rootedContr{\mathcal{M}}{k}{<} = ((W', R', V'), w'_d)$.
        For all $0 \leq h \leq k$, let $Z_h \subseteq W \times W'$ be as in the proof of Theorem \ref{th:k-bisim-1}:
        % Let $Z_k, \dots, Z_0$ be a sequence of binary relations defined as in the proof of Theorem \ref{th:k-bisim-1}. Namely, for each $0 \leq h \leq k$, we let:
        \begin{equation*}
            Z_h = \{(x, \reprClass{x'}) \mid x' \in \maxRepr{W}, x' \bisim_h x \textnormal{ and } \bound{x} \geq h\}.
        \end{equation*}
        We now show that $Z_k, \dots, Z_0$ is a $k$-bisimulation between $\mathcal{M}$ and $\rootedContr{\mathcal{M}}{k}{<}$.
        From the proof of Theorem \ref{th:k-bisim-1}, we immediately get $Z_k \subseteq \dots \subseteq Z_0$, $(w_d, \reprClass{w_d}) \in Z_k$ and [atom].
        Moreover, we get [back$_h$].
        To prove this, it suffices to show that $R'_i \subseteq R''_i$, where $((W'', R'', V''), w''_d)$ is the $k$-contraction of $\mathcal{M}$ of Definition \ref{def:rooted-k-contr-1}.
        Let $x'' R'_i y''$.
        Since $x'' R'_i y''$, by Definition \ref{def:rooted-k-contr} there exist $x', y' \in \maxRepr{W}$ and $y \in W$ such that $x'' = \reprClass{x'}$, $y'' = \reprClass{y'}$, $y' = \leastRepr{y}{\bound{x'}{-}1}$, $x' R_i y$ and $\bound{x'} > 0$. From $y' = \leastRepr{y}{\bound{x'}{-}1}$, we get $y' \bisim_{\bound{x'}{-}1} y$. Choosing $z = y$, we then get $x'' R''_i y''$ by Definition \ref{def:rooted-k-contr-1}, %, we also get $x'' R''_i y''$, 
        as required.

        To show [forth$_h$],
        let $h < k$, $(x, x'') \in Z_{h+1}$, and $x R_i y$.
        Then $x'' = \reprClass{x'}$, where $x' \in \maxRepr{W}$, $x' \bisim_{h+1} x$ and $\bound{x} \geq h+1$.
        We need to find $y'' \in W'$ such that $x'' R'_i y''$ and $(y, y'') \in Z_h$.
        Since $x R_i y$ and $x \bisim_{h+1} x'$, there exists $z \in W$ such that $x' R_i z$ and $y \bisim_h z$.
        Let $y' = \leastRepr{z}{\bound{x'}{-}1}$.
        Then $y' \in \maxRepr{W}$ and $y' \bisim_{\bound{x'}{-}1} z$.
        By ($\dagger$) (see proof of Theorem \ref{th:k-bisim-1}), we get $\bound{x'} \geq h+1$, and thus $y' \bisim_h z \bisim_h y$.
        Since $x R_i y$ and $\bound{x} \geq h+1$, Lemma \ref{lem:bound} gives $\bound{y} \geq h$.
        Let $y'' = \reprClass{y'}$.
        We now have $y' \in \maxRepr{W}$, $y' \bisim_h y$ and $\bound{y} \geq h$, which by definition of $Z_h$ means that $(y, y'') \in Z_h$.
        By Definition \ref{def:rooted-k-contr}, from $x' \in \maxRepr{W}$, $x' R_i z$, $y' = \leastRepr{z}{\bound{x'}{-}1}$ and $\bound{x'} \geq h+1 > 0$, we get $x'' R'_i y''$, as required.
    \end{proof}

%    Finally, we show minimality of the improved definition of $k$-contraction.
 %   The next theorem builds on top of Theorem \ref{th:world-minimal-k-contr}, which only shows world-minimality, and we also prove edge-minimality.

    \begin{theorem}\label{th:minimal-k-contr}
        Let $\mathcal{M}$ be a pointed model and $k \geq 0$. Then $\rootedContr{\mathcal{M}}{k}{<}$ is a minimal pointed model $k$-bisimilar to $\mathcal{M}$ (i.e., it is both world and edge minimal). 
    \end{theorem}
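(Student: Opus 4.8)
The plan is to obtain world minimality essentially for free and then do the real work for edge minimality. First I would observe that the world set $W'$ of $\rootedContr{\mathcal{M}}{k}{<}$ is literally the same as that of $\rootedContr{\mathcal{M}}{k}{}$ (both equal $\{\reprClass{x}\mid x\in\maxRepr{W}\}$). Since Theorem~\ref{th:k-bisim} gives $\mathcal{M}\bisim_k\rootedContr{\mathcal{M}}{k}{<}$ and Theorem~\ref{th:world-minimal-k-contr} gives that $\rootedContr{\mathcal{M}}{k}{}$ has the least number of worlds among models $k$-bisimilar to $\mathcal{M}$, world minimality of $\rootedContr{\mathcal{M}}{k}{<}$ is immediate. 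It then remains to show edge minimality, \ie that every \emph{world-minimal} pointed model $\mathcal{M}''=((W'',R'',V''),w''_d)$ with $\mathcal{M}''\bisim_k\mathcal{M}$ satisfies $|R''_i|\ge|R'_i|$ for each $i$ (and hence has at least as many edges in total), where $\rootedContr{\mathcal{M}}{k}{<}=((W',R',V'),w'_d)$.

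Second, I would pin down the structure of such an $\mathcal{M}''$. Writing $W''_h=\{u\in W''\mid\bound{u}=h\}$, Lemma~\ref{lem:restriction} forces all bounds to lie in $\{0,\dots,k\}$ (a world of depth $>k$, or an unreachable world, could be removed without losing $k$-bisimilarity), and Lemma~\ref{lem:edges-redirection} forces the worlds of each $W''_h$ to be pairwise non-$h$-bisimilar (two distinct $h$-bisimilar worlds of bound $h<k$ could be merged by redirection, while $W''_k=\{w''_d\}$). Re-running the counting argument from the proof of Theorem~\ref{th:world-minimal-k-contr} then yields $|W'_h|=|W''_h|$ for every $h$, together with a bijection $f_h\colon W'_h\to W''_h$ satisfying $(M',x')\bisim_h(M'',f_h(x'))$; here I use Corollary~\ref{cor:non-h-bisim} (the worlds of $W'_h$ are pairwise non-$h$-bisimilar), the size equality, and the fact that each $x'\in W'_h$ is forced, via [forth] and [back], to be $h$-bisimilar to a world of bound \emph{exactly} $h$ in $\mathcal{M}''$. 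Assembling the $f_h$ gives a single bijection $f\colon W'\to W''$.

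Third comes the edge count. Fix $x\in\maxRepr{W}$ with $\bound{x}=h$ and a modality $i$, and let $N=N_i(x)$ be the number of distinct $(h-1)$-bisimilarity classes among the $i$-successors $\{y\mid x R_i y\}$ of $x$ in $\mathcal{M}$ (for $h=0$ there are no outgoing edges, so $N=0$). I would first show that the out-$i$-degree of $\reprClass{x}$ in $\rootedContr{\mathcal{M}}{k}{<}$ equals exactly $N$: by Definition~\ref{def:rooted-k-contr} these successors are the classes $\reprClass{\leastRepr{y}{h-1}}$ for $x R_i y$, and the map $y\mapsto\reprClass{\leastRepr{y}{h-1}}$ identifies two successors if and only if they are $(h-1)$-bisimilar. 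The ``only if'' direction holds because $\leastRepr{\cdot}{h-1}$ depends only on the $(h-1)$-class; the ``if'' direction uses the claim $(\dagger)$ from the proof of Theorem~\ref{th:k-bisim-1} to get $\bound{\leastRepr{y_j}{h-1}}\ge h-1$ (via $\bound{y_j}\ge h-1$ from Lemma~\ref{lem:bound}), and then Proposition~\ref{prop:max-reprs}(\ref{itm:max-reprs-4}) to turn $\reprClass{\leastRepr{y_1}{h-1}}=\reprClass{\leastRepr{y_2}{h-1}}$ back into $y_1\bisim_{h-1}y_2$. Next, combining $(M,x)\bisim_h(M',\reprClass{x})$ (from the proof of Theorem~\ref{th:k-bisim-1}) with $(M',\reprClass{x})\bisim_h(M'',f(\reprClass{x}))$ gives $(M,x)\bisim_h(M'',f(\reprClass{x}))$; for $h\ge1$, [forth$_{h-1}$] then forces $f(\reprClass{x})$ to have, for each of the $N$ realized $(h-1)$-classes, an $i$-successor in that class, and since distinct classes require distinct target worlds, the out-$i$-degree of $f(\reprClass{x})$ is at least $N$. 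Hence the out-$i$-degree never decreases under $f$; summing over all $x$ (via the bijection $f$) and over all $i$ yields $|R''|\ge|R'|$, completing edge minimality.

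The hardest part is the exact out-degree computation in the contraction, namely establishing that $y\mapsto\reprClass{\leastRepr{y}{h-1}}$ induces a \emph{bijection} between the $(h-1)$-classes of $i$-successors of $x$ and the $i$-successors of $\reprClass{x}$, since this is where the interplay of $\leastRepr{\cdot}{h-1}$, the claim $(\dagger)$, and Proposition~\ref{prop:max-reprs}(\ref{itm:max-reprs-4}) must be handled carefully. Setting up the bound-preserving bijection $f$ requires re-deploying the Theorem~\ref{th:world-minimal-k-contr} machinery, and the [forth]-based lower bound on $\mathcal{M}''$ is then routine.
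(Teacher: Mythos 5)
Your proposal is correct and takes essentially the same route as the paper: decompose by bound level, reuse Theorem~\ref{th:world-minimal-k-contr} and Corollary~\ref{cor:non-h-bisim} to set up the level-wise $h$-bisimilarity correspondence between $W'_h$ and $W''_h$, and use the $\leastRepr{\cdot}{h-1}$ construction (together with ($\dagger$) and Proposition~\ref{prop:max-reprs}(\ref{itm:max-reprs-4})) to show that distinct $R'_i$-successors of $\reprClass{x}$ must lie in distinct $(h-1)$-bisimilarity classes. The paper merely packages the edge comparison as a pigeonhole contradiction on $|(R'_i)_h|$ versus $|(R''_i)_h|$ instead of your direct out-degree count, which is a presentational rather than substantive difference.
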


    \begin{proof}
        For any model $M = (W,R,V)$, let $W_h$ denote the subset of worlds with bound $h$, \ie $W_h = \{w \in W \mid \bound{w} = h\}$, and let $(R_i)_h$ denote the set of $i$-edges outgoing from worlds in $W_h$, \ie $(R_i)_h = R_i \cap (W_h \times W)$.
        Let $\mathcal{M} = ((W, R, V), w_d)$, 
        $\mathcal{M}' = \rootedContr{\mathcal{M}}{k}{<} = ((W', R', V'), w'_d)$ and
        $\mathcal{M}'' = ((W'', R'', V''), w''_d)$ be a minimal pointed model with $\mathcal{M}' \bisim_k \mathcal{M}''$.
        By Theorem \ref{th:world-minimal-k-contr}, we have $|W''| = |W'|$.
        % (the two models are based on the same set of worlds).
        We want to show that $|R''_i| \leq |R'_i|$ for all $i \in \modalitiesSet$.
        We show this by proving that $|(R'_i)_h| \leq |(R''_i)_h|$ for all $0 < h \leq k$ (noting that $(R'_i)_h = \varnothing$ for all $h \leq 0$, by Definition \ref{def:rooted-k-contr}).
%        , since if $(\reprClass{x},\reprClass{y}) \in R_i'$ then $\bound{x} > 0$ and by Lemma~\ref{lem:max-repr-bound} we have $\bound{\reprClass{x}} = \bound{x}$).
        To achieve a contradiction, assume that $|(R'_i)_h| > |(R''_i)_h|$ for some $i \in \modalitiesSet$ and $0 < h \leq k$.
        Let $x'' \in W''_h$ and let $x'' R''_i y''$.
        Notice that $(x'', y'') \in (R''_i)_h$.
        From the proof of Theorem \ref{th:world-minimal-k-contr}, since $(M', w'_d) \bisim_k (M'', w''_d)$, there exists $x' \in W'_h$ such that $(M', x') \bisim_h (M'', x'')$.
        Since $(M', x') \bisim_h (M'', x'')$, there exist $y'_1, y'_2 \in W'$ such that $(x', y'_1), (x', y'_2) \in (R'_i)_h$, $(M', y'_1) \bisim_{h-1} (M'', y'')$ and $(M', y'_2) \bisim_{h-1} (M'', y'')$.
        Moreover, let $y'_1 \neq y'_2$.
        Notice that, since $|(R'_i)_h| > |(R''_i)_h|$, such $x'', x', y'_1$ and $y'_2$ must exist.
        By Definition \ref{def:rooted-k-contr}, $x' = \reprClass{x}$, $y'_1 = \reprClass{y_1}$ and $y'_2 = \reprClass{y_2}$, for some $x, y_1, y_2 \in \maxRepr{W}$.
        By Lemma \ref{lem:max-repr-bound}, we get $\bound{x} = \bound{x'} = h$ (since $x' \in W'_h$), $\bound{y_1} = \bound{y_1'}$ and $\bound{y_2} = \bound{y_2'}$.
        From $\bound{x'} = h$, $x' R'_i y'_1$ and $x' R'_i y'_2$, Lemma \ref{lem:bound} gives $\bound{y'_1} \geq h-1$ and $\bound{y'_2} \geq h-1$ and, thus, $\bound{y_1} \geq h-1$ and $\bound{y_2} \geq h-1$.
        % and, by Corollary \ref{cor:non-h-bisim}, we have that either $\bound{y'_1} > h+1$ or $\bound{y'_2} > h+1$.
        % Without loss of generality, let $\bound{y'_1} > h+1$.
        From the proof of Theorem \ref{th:k-bisim}, we then have $(M, y_1) \bisim_{h-1} (M', y'_1)$ and $(M, y_2) \bisim_{h-1} (M', y'_2)$.
        Since $(M', y'_1) \bisim_{h-1} (M'', y'')$ and $(M', y'_2) \bisim_{h-1} (M'', y'')$, we get $y'_1 \bisim_{h-1} y'_2$ and, thus, $y_1 \bisim_{h-1} y_2$.
        By Definition \ref{def:least-repr}, $\bound{x} = h$ and $y_1 \bisim_{h-1} y_2$ imply $\leastRepr{y_1}{h-1} = \leastRepr{y_2}{h-1}$.
        By Definition \ref{def:rooted-k-contr}, we then get $y'_1 = \reprClass{\leastRepr{y_1}{h-1}} = \reprClass{\leastRepr{y_2}{h-1}} = y'_2$, contradicting $y'_1 \neq y'_2$, and we're done.
    \end{proof}

    \section{Exponential Succinctness}\label{sec:succinctness}
    In this section, we show that, for any $k \geq 0$, rooted $k$-contractions can be exponentially more succinct than standard $k$-contractions. This means that we can create models of arbitrary size for which the rooted $k$-contraction is exponentially smaller than the corresponding standard $k$-contraction.   
%    In order to do so, for any $k \geq 0$, we first construct a state $\mathcal{M}^{(k)}$ \tobo{Why is it called like that? I find the notation with $k$ in parenthesis a bit confusing.} and we then use this state to show our exponential succinctness result. \tobo{Why is the state called $\mathcal{M}^{(k)}$ if $k$ is fixed? If $k$ is fixed, it shouldn't be necessary to mention it explicitly as a parameter.}

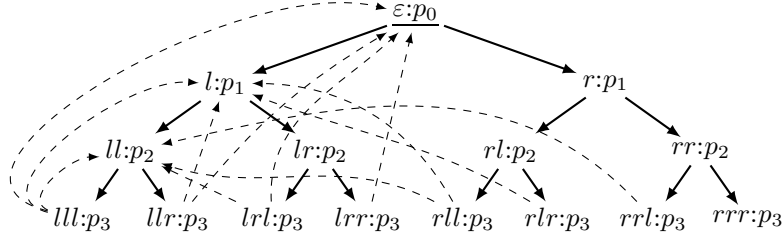
\begin{figure}
    \begin{center}
    \hspace{-1.2cm}
    \begin{tikzpicture}[level distance=0.9cm,
        level 1/.style={sibling distance=5cm},
        level 2/.style={sibling distance=2.5cm},
        level 3/.style={sibling distance=1.25cm},
        grow=down,
        edge from parent/.style={draw,thick,-latex},
        inner sep=2pt]
        \node (e) {\underline{$\varepsilon{:} p_0$}}
          child {node (l) {$l{:} p_1$}
            child {node (ll) {$ll{:} p_2$}
              child {node (lll) {$lll{:} p_3$}}
              child {node (llr) {$llr{:} p_3$}}
            }
            child {node (lr) {$lr{:} p_2$}
              child {node (lrl) {$lrl{:} p_3$}}
              child {node (lrr) {$lrr{:} p_3$}}
            }
          }
          child {node (r) {$r{:}p_1$}
            child {node (rl) {$rl{:}p_2$}
              child {node (rll) {$rll{:}p_3$}}
              child {node (rlr) {$rlr{:}p_3$}}
            }
            child {node (rr) {$rr{:}p_2$}
              child {node (rrl) {$rrl{:}p_3$}}
              child {node (rrr) {$rrr{:}p_3$}}
            }
          };
%          \draw[-latex, dashed, bend left] (lrl) to (l);
%          \draw[-latex, dashed, bend left] (llr) to (lll);
          \draw[-latex, dashed, bend left, out=135, in=135, looseness=1] (lll) to (e);
          \draw[-latex, dashed, bend left, out=125, in=135, looseness=1] (lll) to (l);
          \draw[-latex, dashed, bend left, out=110, in=135, looseness=1] (lll) to (ll);
          \draw[-latex, dashed, bend left, out=10, in=-190] (llr) to (e);
          \draw[-latex, dashed] (llr) to (l);
          \draw[-latex, dashed, bend left, out=40, in=-190] (lrl) to (e);
          \draw[-latex, dashed] (lrl) to (ll);
          \draw[-latex, dashed] (lrr) to (e);

          \draw[-latex, dashed, bend right] (rll) to (l);
          \draw[-latex, dashed, bend right, out=-20, in=170] (rll) to (ll);

          \draw[-latex, dashed, bend right, out=-10, in=180] (rlr) to (l);
        
          \draw[-latex, dashed,bend left,looseness=1,out=320,in=200] (rrl) to (ll);
      \end{tikzpicture}
       \end{center}    
        \caption{The pointed model $\mathcal{M}_k = ((W_k,R_k,V_k),\varepsilon)$ with designated world $\varepsilon$, for $k=3$. The solid and dashed edges are the accessibility edges of $\Box_s$ and $\Box_d$, respectively.}
        \label{fig:binary-tree}
    \end{figure}
    A binary tree of height $k$ has $2^{k+1}-1$ nodes. We will build a model $\mathcal{M}_k = ((W_k,R_k,V_k),\varepsilon)$ on a binary tree of height $k$ such that the standard $k$-contraction of $\mathcal{M}_k$ still has all $2^{k+1}-1$ nodes as worlds, but the rooted $k$-contraction is just a path in the tree, so a model with $k+1$ worlds. Hence the rooted $k$-contraction is exponentially smaller.  
    
    The nodes of a binary tree of height $k$ can be represented as strings of length at most $k$ over the alphabet $\{l,r\}$, where $l$ and $r$ represent the left and right child of a node, respectively. This is illustrated in Figure~\ref{fig:binary-tree} for $k=3$. The root is named $\varepsilon$ (the empty string), and the left and right children of a node $\sigma$ are $\sigma l$ and $\sigma r$, respectively. More precisely, we let the set of worlds $W_k$ of $\mathcal{M}_k$ be the set of strings of length at most $k$ over the alphabet $\{l,r\}$, \ie $W_k = \{ \sigma \in \{l,r\}^\ast \mid | \sigma | \leq k \}$. The tree edges (solid edges in Figure~\ref{fig:binary-tree}) are the edges from $\sigma$ to $\sigma l$ (left child) and to $\sigma r$ (right child). We let $\modalitiesSet = \{s, d\}$, where the accessibility relation $(R_k)_s$ of the modality $\Box_s$ represents the solid edges. Hence we let $(R_k)_s = \{ (\sigma,\sigma \alpha ) \in W_k \times W_k \mid \alpha \in \{l,r\} \}$. The accessibility relation $(R_k)_d$ of the modality $\Box_d$ represents the dashed edges, described later.
    
    As shown in Figure~\ref{fig:binary-tree}, the valuation function $V_k$ of $\mathcal{M}_k$ is such that each world at depth $n$ of the tree makes $p_n$ true and all other propositions false. More precisely, the set of atomic propositions of $\mathcal{M}_k$ is $\atomSet = \{p_0, \dots, p_k\}$, and we
    let $V_k(p_n) = \{ \sigma \in W_k \mid | \sigma | = n \}$. Suppose we decided to let $(R_k)_d = \emptyset$, \ie ignore the dashed edges. Then, the model is simply a binary tree where each world is labelled by an atomic proposition denoting the depth of the world. Clearly, any two worlds at the same depth are then bisimilar. Hence, the bisimulation contraction of $\mathcal{M}_k$ will simply be a chain model with $k+1$ worlds where the first world is labelled $p_0$, the second $p_1$, etc. % \ie the model $\mathcal{C}_k$ with worlds $W = \{p_0, \dots, p_k\}$, accessibility relation $R = \{ (p_i, p_{i+1}) \mid 0 \leq i < k \}$ and valuation $V(p_i) = \{p_i\}$.
    We add the dashed edges to ensure that $\mathcal{M}_k$ cannot be contracted further when considering standard $k$-contractions, but where the rooted $k$-contraction is still the simple chain model.   
    
    We now describe the dashed edges of the model. The dashed edges are from the leaf nodes of the binary tree to nodes of the leftmost branch of the tree. The crucial property of these edges is that each leaf node has edges to a different subset of the nodes of the leftmost branch. Hence no two leaf nodes will satisfy the same formulas of modal depth 1. Our specific choice is to put an edge from a leaf node $\sigma$ to the leftmost node at depth $n$ iff the $(n+1)$st letter of $\sigma$ is $l$, see Figure~\ref{fig:binary-tree}. More precisely, we let $(R_k)_d = \{ (\alpha_1 \cdots \alpha_k, l^n) \in W_k \times W_k \mid \alpha_{n+1} = l \}$ (where $l^n$ as usual denotes the string with $n$ occurrences of $l$). 

    \begin{lemma}
        Let $\sigma,\tau$ be distinct worlds of $\mathcal{M}_k$ at depth $n$. Then $\sigma \bisim_{k-n} \tau$ but not $\sigma \bisim_{k-n+1} \tau$.
    \end{lemma}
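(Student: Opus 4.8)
The plan is to establish the two claims separately: the positive claim $\sigma \bisim_{k-n} \tau$ by exhibiting an explicit $(k-n)$-bisimulation, and the negative claim $\sigma \not\bisim_{k-n+1} \tau$ by producing a single distinguishing formula and invoking Proposition~\ref{prop:k-bisim}.

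For $\sigma \bisim_{k-n}\tau$, I would take the sequence $Z_{k-n}\subseteq\cdots\subseteq Z_0$ defined by
\[
Z_h=\{(\sigma\rho,\tau\rho)\mid \rho\in\{l,r\}^\ast,\ |\rho|\le (k-n)-h\},\qquad 0\le h\le k-n.
\]
The nesting is immediate and $(\sigma,\tau)\in Z_{k-n}$ (take $\rho=\varepsilon$). The observation that makes the clauses go through is that dashed edges emanate only from the depth-$k$ leaves, and these are out of reach within the available budget: every solid edge raises the depth by one while every dashed edge strictly lowers it, so reaching depth $k$ from depth $n$ costs at least $k-n$ solid steps. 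Concretely, whenever $(\sigma\rho,\tau\rho)\in Z_{h+1}$ the common depth is $n+|\rho|\le k-h-1<k$, so $\sigma\rho$ and $\tau\rho$ are internal nodes whose only outgoing edges are the two solid edges to their children. Thus [forth$_h$] and [back$_h$] are met by matching $\sigma\rho\to\sigma\rho\alpha$ with $\tau\rho\to\tau\rho\alpha$ (which lands in $Z_h$ since $|\rho\alpha|\le (k-n)-h$), and [atom] holds because $\sigma\rho$ and $\tau\rho$ lie at the same depth $n+|\rho|$ and hence make exactly $p_{n+|\rho|}$ true.

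For $\sigma\not\bisim_{k-n+1}\tau$, since $\sigma$ and $\tau$ are distinct strings of equal length $n$, they differ at some position $j\le n$; without loss of generality their $j$-th letters are $l$ and $r$. Consider $\varphi=\Box_s^{k-n}\Diamond_d p_{j-1}$, where $\Box_s^{k-n}$ denotes $k-n$ nested $\Box_s$, so that $\md(\varphi)=(k-n)+1$. By the construction of the dashed edges, a leaf $\alpha_1\cdots\alpha_k$ has a $d$-edge to the unique depth-$(j-1)$ world $l^{j-1}$ exactly when $\alpha_j=l$, hence satisfies $\Diamond_d p_{j-1}$ iff $\alpha_j=l$. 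Every leaf below $\sigma$ has $j$-th letter $\sigma_j=l$ (as $j\le n$) and so satisfies $\Diamond_d p_{j-1}$, whereas every leaf below $\tau$ has $j$-th letter $\tau_j=r$ and so fails it. Since $\Box_s^{k-n}$ at a depth-$n$ world quantifies over precisely the depth-$k$ leaves below it, we get $\sigma\models\varphi$ and $\tau\not\models\varphi$, and Proposition~\ref{prop:k-bisim} yields $\sigma\not\bisim_{k-n+1}\tau$.

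The routine part is the verification of the bisimulation clauses and of the satisfaction of $\varphi$, which is bookkeeping on string lengths and depths. The genuinely load-bearing steps, where I expect the real work to sit, are the two structural facts: for the positive direction, that the dashed edges are unreachable within budget $k-n$, so the $(k-n)$-neighbourhood of a depth-$n$ world is a complete binary subtree identical for all depth-$n$ worlds; and for the negative direction, that the dashed edges were designed precisely so that a leaf's $d$-edge to depth $j-1$ is governed solely by its $j$-th letter, which is what turns the differing position $j$ into a difference observable at modal depth $(k-n)+1$.
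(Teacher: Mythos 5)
Your proof is correct and follows essentially the same approach as the paper: the dashed edges are unreachable within modal budget $k-n$ from depth $n$ (giving the positive direction), and the differing letter at position $j$ is detected by a depth-$(k-n+1)$ formula routed through a dashed edge to the leftmost branch (giving the negative direction). The only difference is cosmetic: you build an explicit $(k-n)$-bisimulation where the paper argues via modal equivalence and Proposition~\ref{prop:k-bisim}, and you use $\Box_s^{k-n}$ where the paper uses $\Diamond_s^{k-n}$; both work.
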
 
    \begin{proof}
        Since $\sigma$ and $\tau$ are worlds at depth $n$ of the tree, they satisfy the same formulas up to modal depth $k-n$, since none of the dashed edges can be reached by formulas of modal depth $\leq k -n$ from worlds at depth $n$. From Proposition~\ref{prop:k-bisim} we can then conclude that $\sigma \bisim_{k-n} \tau$, as required. 

        Since $\sigma \neq \tau$, they differ in at least one of their positions, say position $m+1$ (\ie they differ in the $(m+1)$st letter). Suppose without loss of generality that $\sigma$ has $l$ in position $m+1$ and $\tau$ has $r$.  
        Then by construction of $\mathcal{M}_k$ we have that the formula $\Diamond_s^{n-k} \Diamond_d\ p_{m}$ is true in $\sigma$ but not in $\tau$: There exists a solid path of length $n-k$ from $\sigma$ to a leaf node that can see a $p_{m}$ world via a dashed edge, but there is no such path from $\tau$. Hence $\sigma$ and $\tau$ are not $(n-k+1)$-bisimilar. 
    \end{proof}       

    We can now reason as follows. Since the standard $k$-contraction only identifies worlds that are $k$-bisimilar, it will never be able to identify worlds at the same depth of the tree since, according to the lemma, any such two distinct worlds $\sigma$ and $\tau$ are not $(k-n+1)$-bisimilar and hence not $k$-bisimilar. 
    It will not be able to identify worlds at different levels either, as they have distinct valuations. Hence the standard $k$-contraction of $\mathcal{M}_k$ will contain all worlds of the original tree, so $2^{k+1}-1$ worlds. 
    Compare this to the rooted $k$-contraction. The worlds of the rooted $k$-contraction are of the form $\reprClass{\sigma}$ where $\sigma \in \maxRepr{W_k}$. Note that if $\sigma$ is a world of $\mathcal{M}_k$ at depth $n$, then $\bound{\sigma} = k - n$, and hence $\reprClass{\sigma} = \class{\sigma}{k-n}$. By the lemma, we then get that any two worlds $\sigma,\tau$ at the same depth of $\mathcal{M}_k$ belong to the same representative class $\reprClass{\sigma}$ (since they are $(k-n)$-bisimilar). Hence the rooted $k$-contraction can have at most $k+1$ worlds, one per level of the tree. We have hence proved the following succinctness result. 
    \begin{theorem}[Exponential succinctness] 
        There exist models $\mathcal{M}_k$, $k \geq 0$, for which the rooted $k$-contraction has $\Theta(k)$ worlds whereas the standard $k$-contraction has $\Theta(2^k)$ worlds.
    \end{theorem}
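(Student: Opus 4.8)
The plan is to let the preceding Lemma do the heavy lifting and then simply count, separately, the worlds retained by the two contraction notions when applied to $\mathcal{M}_k$. The whole argument hinges on the dichotomy supplied by the Lemma, together with the elementary observation that worlds at different tree depths are propositionally distinct (a world at depth $m$ satisfies $p_m$ and no other $p_n$), and hence not even $0$-bisimilar.

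For the standard $k$-contraction $\contr{\mathcal{M}_k}{k}$, whose worlds are the classes $\class{w}{k}$, I would argue that no two distinct worlds of $\mathcal{M}_k$ are $k$-bisimilar, so that every such class is a singleton. Two worlds at distinct depths are not $0$-bisimilar by the valuation observation above, hence not $k$-bisimilar. Two distinct worlds $\sigma,\tau$ at a common depth $n$ (so necessarily $n \geq 1$) are, by the Lemma, not $(k-n+1)$-bisimilar; since $k-n+1 \leq k$, $k$-bisimilarity would imply $(k-n+1)$-bisimilarity, so $\sigma$ and $\tau$ are not $k$-bisimilar either. Thus $\contr{\mathcal{M}_k}{k}$ retains all $|W_k| = \sum_{n=0}^{k} 2^n = 2^{k+1}-1 = \Theta(2^k)$ worlds.

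For the rooted $k$-contraction $\rootedContr{\mathcal{M}_k}{k}{}$, whose worlds are the representative classes $\reprClass{x} = \class{x}{\bound{x}}$ for $x \in \maxRepr{W_k}$, I would show there is exactly one such class per depth level. A world $\sigma$ at depth $n$ has $\bound{\sigma} = k-n$, so $\reprClass{\sigma} = \class{\sigma}{k-n}$. By the Lemma any two worlds at depth $n$ are $(k-n)$-bisimilar and therefore share the same representative class; conversely, worlds at distinct depths are not $0$-bisimilar and so lie in distinct classes. Hence the representative classes are in bijection with the depths $0,1,\dots,k$, giving $\rootedContr{\mathcal{M}_k}{k}{}$ exactly $k+1 = \Theta(k)$ worlds, and combining the two counts yields the claimed exponential gap.

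Once the Lemma is in hand the argument is essentially bookkeeping, so I do not expect a genuine obstacle; the one point demanding care is the asymmetry in how the two notions treat same-depth worlds: the standard contraction cannot merge them because it insists on full $k$-bisimilarity, whereas the rooted contraction only requires $(k-n)$-bisimilarity of the depth-$n$ worlds, which is precisely what the Lemma guarantees. It is also worth recording explicitly (though it does not change the count) that here $\maxRepr{W_k} = W_k$, since no world is strictly represented by another; the collapse in the rooted case therefore comes entirely from distinct maximal representatives sharing a single representative class, and not from any world being discarded from $\maxRepr{W_k}$.
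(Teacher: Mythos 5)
Your proposal is correct and follows essentially the same route as the paper: it uses the lemma to show that same-depth worlds are not $k$-bisimilar (so the standard $k$-contraction keeps all $2^{k+1}-1$ worlds) but are $(k-n)$-bisimilar (so they collapse into a single representative class per level in the rooted $k$-contraction), with distinct valuations separating worlds at different depths. Your explicit remarks that $\maxRepr{W_k}=W_k$ and that the count is exactly $k+1$ (not merely at most) are slightly more careful than the paper's exposition but do not change the argument.
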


    \bibliographystyle{aiml}

\end{document}